\documentclass{vldb}

\usepackage[utf8]{inputenc}
\usepackage{balance}  
\usepackage[labelfont=bf,textfont=md]{caption}
\usepackage{graphicx}
\usepackage{algorithm}
\usepackage[noend]{algorithmic}
\usepackage{amsmath}
\usepackage{dsfont}
\usepackage[dvipsnames]{xcolor}
\usepackage{subfig}
\usepackage{hyperref}
\usepackage[normalem]{ulem}
\usepackage{xspace}
\usepackage{color}
\usepackage{colortbl}

\newtheorem{theorem}{Theorem}[section]
\newtheorem{example}{Example}[section]

\def\oriDistr{P}      
\def\propDistr{Q}	

\def\qOneIntro{Q_1}
\def\qTwoIntro{Q_2}
\def\countq{\mathtt{count}}  
\def\topk{\mathtt{top}}  

\def\numPatterns{z}  
\def\numSubRankings{w}  
\def\numPropDistrs{d}


\def\embedding{\delta}
\def\ungroundedVariables{V^+}
\def\partialOrder{\boldsymbol{\upsilon}}
\def\subRanking{\boldsymbol{\psi}}
\def\extension{\Delta}
\def\compensation{c}
\def\benchmarkA{\textbf{Benchmark-A}}
\def\benchmarkB{\textbf{Benchmark-B}}
\def\benchmarkC{\textbf{Benchmark-C}}
\def\benchmarkD{\textbf{Benchmark-D}}


\def\e#1{{\em #1}}
\def\val#1{\texttt{#1}}
\def\angs#1{\mathord{\langle #1 \rangle}}
\def\btau{\boldsymbol{\tau}} 
\def\bsigma{\boldsymbol{\sigma}} 
\def\tup#1{\mathbf{#1}}

\newcommand*{\set}[1]{\{ #1 \}}
\newcommand*{\ranking}[1]{\langle #1 \rangle}
\newcommand*{\seq}[1]{\langle #1 \rangle}
\newcommand*{\lst}[2][m] {#2_1, \ldots, #2_{#1}}
\newcommand*{\defeq}{\ \mathrel{\mathop:}= \ }

\newcommand*{\iToj}{{i \rightarrow j}}
\newcommand*{\pluseq}{{\ \mathrel{{+}{=}}} \ }

\def\RIM{\mathsf{RIM}}
\def\mallows{\mathsf{MAL}}
\def\AMP{\mathsf{AMP}}
\def\LRIM{\RIM_{\mathsf{L}}}
\def\LMAL{\mallows_{\mathsf{L}}}

\definecolor{Gray}{gray}{0.5}
\definecolor{LightCyan}{rgb}{0.88,1,1}
\newcolumntype{a}{>{\columncolor{Gray}}c}
\def\w#1{{\cellcolor[gray]{0.5}\color{white}\textsf{#1}}}

\newcommand*{\ifff}{if and only if\xspace}
\newcommand*{\eg}{e.g.,\xspace}
\newcommand*{\ie}{i.e.,\xspace}

\def\Pr{\mathrm{Pr}}

\def\dist{\mathit{dist}}

\begin{document}
\title{Supporting Hard Queries over Probabilistic Preferences}

\numberofauthors{3}
\author{
\alignauthor Haoyue Ping\\
       \affaddr{New York University, USA}\\
       \email{hp1326@nyu.edu}
\alignauthor Julia Stoyanovich \titlenote{This work was supported in part by NSF Grants No. 1916647, 1926250, and 1934464.}
\\
       \affaddr{New York Univerisity, USA}\\
       \email{stoyanovich@nyu.edu}
\alignauthor Benny Kimelfeld \titlenote{This work was supported in part by BSF Grant No. 2017753 and ISF Grant No. 1295/15.}
\\
       \affaddr{Technion, Israel}\\
       \email{bennyk@cs.technion.ac.il}
}

\date{March 2020}

\maketitle

\begin{abstract}
Preference analysis is widely applied in various domains such as social choice and e-commerce. 
A recently proposed framework augments the relational database with a preference relation that represents uncertain preferences in the form of statistical ranking models, and provides methods to evaluate Conjunctive Queries (CQs) that express preferences among item attributes.
In this paper, we explore the evaluation of queries that are more general and harder to compute.

The main focus of this paper is on a class of CQs that cannot be evaluated by previous work.  These queries are provably hard since relate variables that represent items being compared.  To overcome this hardness, we instantiate these variables with their domain values, rewrite hard CQs as unions of such instantiated queries, and 
develop several exact and approximate solvers to evaluate these unions of queries.  
We demonstrate that exact solvers that target specific common kinds of queries are far more efficient than general solvers. 
Further, we demonstrate that sophisticated approximate solvers making use of importance sampling can be orders of magnitude more efficient than exact solvers, while showing good accuracy.  
In addition to supporting provably hard CQs, we also present methods to evaluate an important family of count queries, and of top-$k$ queries.
\end{abstract}

\section{Introduction}
\label{sec:intro}

Preferences are statements about the relative quality or desirability of items.  
Preference analysis aims to derive insight from a collection of preferences.
For example, in recommender systems~\cite{DBLP:journals/fcsc/BalakrishnanC12,DBLP:conf/wsdm/SarmaSGP10,DBLP:conf/webdb/StoyanovichJG15} and in political elections~\cite{Diaconis89,GormleyM06,gormley2008,mcelroy}, we may be interested in identifying the most preferred items or sets of items, or in understanding the points of consensus or disagreement among a group of voters.

Voter preferences are often inferred from indirect input (such as clicks on ads), or from  preferences of other similar voters based on demographic similarity or on similarity over stated preferences, as in collaborative filtering, and are thus uncertain.
A variety of statistical models have been developed to represent uncertain preferences~\cite{marden1995analyzing}, including the popular Mallows model~\cite{Mallows1957}.
There is much recent work in the machine learning and statistics communities~\cite{DBLP:conf/nips/AwasthiBSV14,BusseOB07,DBLP:journals/corr/DingIS15, GormleyM06,DBLP:journals/jair/HuangKG12,KamishimaA06,LebanonL02,DBLP:conf/nips/LebanonM07,DBLP:journals/jmlr/LuB14}, focusing specifically on learning the parameters of Mallows models or their mixtures~\cite{DBLP:journals/corr/DingIS15, Doignon2004, DBLP:conf/aaai/KenigIPKS18, DBLP:journals/jmlr/LuB14, DBLP:conf/webdb/StoyanovichIP16}.
Learning techniques for Mallows often rely on the Repeated Insertion Model (RIM)~\cite{Doignon2004} --- a generative model that gives rise to various distributions over  rankings.

In a recent work~\cite{DBLP:conf/pods/KenigKPS17}, we introduced a framework for representing and querying uncertain preferences in a \e{Probabilistic Preference Database}, or \e{PPD} for short.
We recall this framework here, illustrating it with an example. Consider Figure~\ref{fig:elections} that presents an instance of a polling database for the 2016 US presidential election. Each of
\textbf{Candidates} and \textbf{Voters} is an \e{ordinary relation} (abbr.~o-relation), while \textbf{Polls} is a \e{preference relation} (abbr.~p-relation) where each tuple is associated with a preference model---Mallows in this example.
Mallows models are ranking distributions parameterized by a center ranking $\bsigma$ and a dispersion parameter $\phi$.
We will discuss the Mallows model in Section~\ref{sec:preliminaries:rim}, explaining that it is a special case of RIM~\cite{Doignon2004}.
The PPD formalism of~\cite{DBLP:conf/pods/KenigKPS17}, on which we build here, accommodates RIM preferences, and we refer to such a database as a {\em RIM-PPD}.

In summary, a RIM-PPD represents uncertain preferences by statistical models.
Semantically, a RIM-PPD instance is a \e{probabilistic database}~\cite{DBLP:series/synthesis/2011Suciu}, where every random possible world (a deterministic database) is obtained by sampling from the stored RIM models.
RIM-PPDs adopt the conventional semantics of query evaluation over probabilistic databases, associating each answer with a \e{confidence value}---the probability of getting this answer in a random possible world~\cite{DBLP:series/synthesis/2011Suciu}.
Hence, query evaluation entails \e{probabilistic inference}: computing the marginal probability of query answers.
In the case of RIM-PPDs, query evaluation entails inference over statistical \e{ranking models}.

\begin{figure}[t!]
	\small
	\centering
	\begin{tabular}{|c c c c c c|}
		\multicolumn{4}{l}{{\bf Candidates} (o)}\\
		\hline
		\w{candidate} & \w{party} & \w{sex} & \w{age} & \w{edu} & \w{reg}\\
		\hline
		$\val{Trump}$ & $\val{R}$ & $\val{M}$ & $\val{70}$ & $\val{BS}$ & $\val{NE}$ \\
		$\val{Clinton}$ & $\val{D}$ & $\val{F}$ & $\val{69}$ & $\val{JD}$ & $\val{NE}$ \\
		$\val{Sanders}$ & $\val{D}$ & $\val{M}$ & $\val{75}$ & $\val{BS}$ & $\val{NE}$ \\
		$\val{Rubio}$ & $\val{R}$ & $\val{M}$ & $\val{45}$ & $\val{JD}$ & $\val{S}$ \\
		\hline
	\end{tabular}
	\quad
	\vskip0.3em
	\begin{tabular}{|c c c c|}
		\multicolumn{4}{l}{{\bf Voters} (o)}\\
		\hline
		\w{voter} & \w{sex} & \w{age} & \w{edu}\\
		\hline
		$\val{Ann}$ & $\val{F}$ & $\val{20}$ & $\val{BS}$\\
		$\val{Bob}$ & $\val{M}$ & $\val{30}$ & $\val{BS}$\\
		$\val{Dave}$ & $\val{M}$ & $\val{50}$ & $\val{MS}$\\
		\hline
	\end{tabular}
	\vskip0.3em
	\begin{tabular}{|c c|l}
		\multicolumn{3}{l}{{\bf Polls} (p)}\\
		\w{voter} & \w{date} & \underline{\textit{Preference model} $\mallows(\bsigma, \phi)$} \\
		\cline{1-2}
		$\val{Ann}$ & $\val{5/5}$ &  $\ranking{\val{Clinton},\val{Sanders},\val{Rubio},\val{Trump}}, 0.3$\\
		$\val{Bob}$ & $\val{5/5}$ & $\ranking{\val{Trump},\val{Rubio},\val{Sanders},\val{Clinton}}, 0.3$\\
		$\val{Dave}$ & $\val{6/5}$ & $\ranking{\val{Clinton},\val{Sanders},\val{Rubio},\val{Trump}}, 0.5$\\
		\cline{1-2}
	\end{tabular}
	\caption{An instance of RIM-PPD.}
    \label{fig:elections}
\end{figure}

A preference relation in a possible world represents a collection of orders, each called a {\em session}.
A tuple of a preference relation has the form $(\tup s;a;b)$, stating that in the order of session $\tup s$ item $a$ is preferred to item $b$, denoted $a \succ_{\tup s} b$.

For example, the tuple $(\val{Ann},\val{5/5};\val{Sanders};$ $\val{Clinton})$ in an instance of the \textbf{Polls} relation denotes that in a poll conducted on May $5^{th}$, Ann preferred Sanders to Clinton. Here, $(\val{Ann},\val{5/5})$ identifies a session. Note that the internal representation of a preference needs not store every pairwise comparison explicitly.  

Incorporating preferences into databases facilitates preference analysis. For example, an analyst may ask whether Ann prefers Trump to both Clinton and Rubio on May $5^{th}$ as follows, using $P$ to denote \textbf{Polls}:
\begin{align*} 
Q_0() \leftarrow & P(\val{Ann},\val{5/5};\val{Trump};\val{Clinton}), \\
                               & P(\val{Ann},\val{5/5};\val{Trump};\val{Rubio})
\end{align*}
 $Q_0$ is a Boolean conjunctive query (CQ) that computes the marginal probability of $\{\val{Trump} \succ \val{Clinton}, \val{Trump} \succ \val{Rubio}\}$ over the Mallows model of $(\val{Ann},\val{5/5})$.

The analyst may query preferences about the attributes of candidates, which generalizes the preferences over specific candidates.
For example, using $C$ to denote \textbf{Candidates}:
\[
\qOneIntro() \leftarrow P(\_,\_;c_1;c_2), C(c_1,\_,\val{F},\_,\_,\_), C(c_2,\_,\val{M},\_,\_,\_)
\]
The evaluation of $\qOneIntro$ computes the marginal probability that a female candidate is preferred to a male candidate over the random preferences of the users, drawn from their corresponding preference models.
We refer to the values of item attributes, such as \val{F} and \val{M}, as \e{labels}.
$\qOneIntro$ is an example of an \e{itemwise} CQ~\cite{DBLP:conf/pods/KenigKPS17}, querying preferences over labels.
Intuitively, itemwise CQs state a preference among constants and variables (\eg $c_1 \succ c_2$, or $c_1 \succ  \val{Trump}$) in addition to an independent condition on  item variables (\eg $c_1$ is a female candidate and $c_2$ is a male candidate), and this preference can be represented as a partial order of labels, named \e{label patterns} (\eg $\val{F} \succ \val{M}$).
Kenig et al.~\cite{DBLP:conf/pods/KenigKPS17} show that, at least for the fragment of queries without self-joins, itemwise CQs are {\em precisely} the queries that can be evaluated in polynomial time.
In a follow-up work, Cohen et al.~\cite{DBLP:conf/sigmod/CohenKPKS18} proposed a query engine that uses inference to evaluate these queries that have tractable complexity.

{\bf Problem statement.}
In this paper, we focus on extending RIM-PPD query evaluation to support general CQs, those that are {\em provably hard}.	
Given a non-itemwise CQ $Q$ and an instance $D$ of RIM-PPD, the goal is to calculate the probability that $Q$ holds in a random possible world.	This query evaluation problem is reduced to an inference problem over RIM.  We  investigate two types of queries beyond CQs, and also reduce their evaluation to inference over RIM.  This problem statement will be refined in Section~\ref{sec:overview:summary}.

To get the gist of our approach, consider the query:
\[
\qTwoIntro() \leftarrow P(\_,\_;c_1;c_2), C(c_1,\val{D},\_,\_,e,\_), C(c_2,\val{R},\_,\_,e,\_);
\]
$\qTwoIntro$ asks for the marginal probability that a Democrat $c_1$ is preferred to a Republican $c_2$ having the same education degree $e$. 
As $e$ is a variable, the qualified candidates for $c_1$ and $c_2$ cannot be determined ahead of time. 
According to the instance of \textbf{Candidates} in Figure~\ref{fig:elections}, $e$ takes on values $\val{BS}$ and $\val{JD}$.
Substituting $e$ with these values in $\qTwoIntro()$, we get:
\begin{align*} 
\qTwoIntro^{\val{BS}}() \leftarrow P(\_,\_;c_1;c_2), C(c_1,\val{D},\_,\_,\val{BS},\_), C(c_2,\val{R},\_,\_,\val{BS},\_);\\
\qTwoIntro^{\val{JD}}() \leftarrow P(\_,\_;c_1;c_2), C(c_1,\val{D},\_,\_,\val{JD},\_), C(c_2,\val{R},\_,\_,\val{JD},\_);
\end{align*}
Note that $\qTwoIntro^{\val{BS}}$ and $\qTwoIntro^{\val{JD}}$ are both itemwise CQs, and so their evaluation is tractable.  Further, according to the semantics of CQ evaluation, $\qTwoIntro$ holds if either $\qTwoIntro^{\val{BS}}$ holds or $\qTwoIntro^{\val{JD}}$ holds (\ie $\qTwoIntro = \qTwoIntro^{\val{BS}} \cup \qTwoIntro^{\val{JD}}$).	Note that it is possible for a ranking to satisfy both $\qTwoIntro^{\val{BS}}$ and $\qTwoIntro^{\val{JD}}$; $\ranking{\val{Sanders}, \val{Trump}, \val{Cliton}, \val{Rubio}}$ is an example.  Therefore, $\qTwoIntro^{\val{BS}}$ and $\qTwoIntro^{\val{JD}}$ are not mutually exclusive and $\Pr(\qTwoIntro) < \Pr(\qTwoIntro^{\val{BS}}) + 
\Pr(\qTwoIntro^{\val{JD}})$ may hold.

More generally, a non-itemwise CQ can be decomposed into a union of itemwise CQs, but the probability of a query union is not the sum of probabilities of its individual CQs.
The size of the union depends on the domain size of the instantiated variables.
We propose three exact solvers for the inference problem induced by this decomposition.
The first is  based on the inclusion-exclusion principle, and works for a union of any label patterns.
This solver, while general, does not scale well when the product of the domain sizes of the variables is large, and we use it as a performance baseline.  We propose two additional exact solvers, optimized for families of label patterns that are commonly used in practice: two-label patterns and bipartite patterns that are similar to bipartite graphs.

Further, we propose approximate solvers based on Multiple Importance Sampling (MIS).
We develop several flavors of approximate solvers, compare their performance, and show that they can outperform exact solvers by several orders of magnitude, while achieving good accuracy.

Finally,  we expand the family of supported queries to involve \e{Count-Session}, returning the number of sessions satisfying a given query $Q$, and \e{Most-Probable-Session}, returning $k$ sessions that support $Q$ with the highest probability.

\paragraph*{{\bf Contributions}} We make the following contributions: 
\begin{enumerate}
\item We reduce the evaluation of conjunctive queries over probabilistic preference databases to an inference problem over a union of label patterns (Section~\ref{sec:overview});
\item We develop exact solvers for CQs, Count-Session and Most-Probable-Session queries (Section~\ref{sec:exact}); 
\item We propose approximate solvers, based on Multiple Importance Sampling, that improve scalability, while  achieving good accuracy (Section~\ref{sec:approx}); and
\item We present results of an extensive experimental evaluation over real and synthetic datasets, demonstrating that (i) customized exact solvers see substantial improvement; (ii) approximate solvers are effective and scalable; (iii)  evaluation is well optimized for Most-Probable-Session queries; and (iv) the implementation can handle a large number of sessions  (Section~\ref{sec:experiments}).
\end{enumerate}
\section{Preliminaries}
\label{sec:preliminaries}

\subsection{Preferences and Label Patterns}
\label{sec:preliminaries:pref}

Let $A$ denote a set of $m$ items.
Preference is a binary relation over $A$. 
Let $a \succ b$ denote that $a \in A$ is preferred to $b \in A$. 
If the preference is from a judge $u$, we denote it by $a \succ_u b$. 
The preference relation $\succ$ is irreflexive, transitive, and asymmetric.

A \e{preference pair} compares two items. 
\e{Pairwise preferences} are a collection of preference pairs, such as $\{ a \succ b, a \succ c \}$.
They can be visualized by a directed graph with items as vertices and preference pairs as edges. 
If the directed graph is acyclic, it represents a \e{partial order}.
Since the relation $\succ$ is transitive, a partial order $\partialOrder$ expresses the same information as its transitive closure $tc(\partialOrder)$.

A \e{linear order} or \e{ranking} or \e{permutation} is a partial order where every two items in $A$ are comparable.
Let $\btau = \ranking{\lst{\tau}}$ denote a ranking placing item $\tau_i$ at rank $i$. 
We denote by $\btau(i)$ the item at rank $i$, by $\btau^{-1}(\tau)$ the rank of item $\tau$.
We denote by $rnk(A)$ the set of all $m!$ permutations over the items in $A$.
We denote by $\btau^k$ the truncated $\btau$ with only the first $k$ items, and by $A(\btau^k)$ the items in $\btau^k$.

A ranking $\btau$ is a \e{linear extension} of a partial order $\partialOrder$ if $\btau$ is consistent with $\partialOrder$ (\ie $\forall (x \succ y) \in \partialOrder, x \succ_{\btau} y$). 
We use $\Omega(\partialOrder)$ to denote the set of linear extensions of $\partialOrder$.

A \e{sub-ranking} $\subRanking$ is a ranking over a subset of the items in $A$, denoted by $A(\subRanking)$.
A sub-ranking can also be consistent with a partial order $\partialOrder$.
Let $\extension(\partialOrder)$ denote the set of sub-rankings that are consistent with $\partialOrder$, over the same set of items in $\partialOrder$, denoted by $A(\partialOrder)$.

\e{Labels} are values of item attributes.
For example, $\val{M}$ is a label of item \val{Trump} in Figure~\ref{fig:elections} that corresponds to the value of the \val{sex} attribute.
A \e{label pattern} (or just \e{pattern}) is a partial order of atomic labels or sets of labels.  For example,  $\set{\set{\val{M}, \val{JD}} \succ \val{BS}}$  denotes that male candidates with a JD are preferred to candidates with a BS degree.
A pattern can be represented by a directed acyclic graph $g$.
Figure~\ref{fig:pattern_example} presents a pattern $g_0 {=} \set{\val{F} {\succ} \val{M}}$ related to the RIM-PPD in Figure~\ref{fig:elections}.

\begin{figure}[t!]
	\centering
	\includegraphics[width=0.65\linewidth]{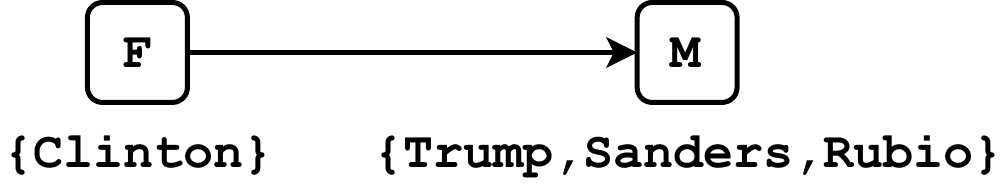}
	\caption{A label pattern over the polling database. Candidates with labels $\val{F}$ and $\val{M}$ are annotated below, respectively.}
	\label{fig:pattern_example}
\end{figure}

\subsection{Repeated Insertion Model}
\label{sec:preliminaries:rim}

The Repeated Insertion Model (RIM) is a generative ranking model that defines a probability distribution over permutations~\cite{Doignon2004}.
This distribution, denoted by $\RIM(\bsigma, \Pi)$, is parameterized by a reference ranking $\bsigma=\ranking{\lst{\sigma}}$ and a function $\Pi$, where $\Pi(i, j)$ is the probability of inserting $\sigma_i$ at position $j$.
Algorithm~\ref{alg:rim} presents the RIM sampling procedure.
It starts with an empty ranking, inserts items in the order of $\bsigma$, and puts item $\sigma_i$ at $j$-th position of the current incomplete ranking with probability $\Pi(i, j)$.

\begin{example}
	$\RIM(\ranking{a, b, c}, \Pi)$ generates $\btau' {=} \ranking{b, c, a}$ as follows.
	Initialize an empty ranking $\btau_0 {=} \ranking{}$. 
	At step 1, $\btau_1 {=} \ranking{a}$ by inserting $a$ into $\btau_0$ with probability $\Pi(1,1) {=} 1$. 
	At step 2, $\btau_2 {=} \ranking{b, a}$ by inserting $b$ into $\btau_1$ at position 1 with probability $\Pi(2,1)$. Note that $b$ is put before $a$ since $b \succ_{\btau'} a$.
	At step 3, $\btau' {=} \ranking{b, c, a}$ by inserting $c$ into $\btau_2$ at position 2 with probability $\Pi(3,2)$. 
	The overall probability of sampling $\btau'$ is $\Pr(\btau' \mid \ranking{a,b,c}, \Pi) {=} \Pi(1,1) \cdot \Pi(2,1) \cdot \Pi(3,2)$.
\end{example}

The Mallows model~\cite{Mallows1957}, $\mallows(\bsigma, \phi), \phi \in [0, 1]$, is a special case of RIM.
As a popular preference model, it defines a distribution of rankings that is analogous to the Gaussian distribution.
Ranking $\bsigma$ is at the center.
Rankings closer to $\bsigma$ have higher probabilities.
For a ranking $\btau$, its probability $\Pr(\btau|\bsigma, \phi) \propto \phi^{\dist(\bsigma, \btau)}$ where $\dist(\bsigma, \btau)$ is the Kendall-tau distance between $\bsigma$ and $\btau$: $\dist(\bsigma, \btau) = |{(a, a') | a \succ_{\bsigma} a', a' \succ_{\btau} a}|$ that is the number of disagreeing preference pairs.
When $\phi=0$, only $\bsigma$ has positive probability; 
when $\phi=1$, all rankings have the same probability, that is, $\mallows(\bsigma, 1)$ is the uniform distribution over rankings.
RIM was proposed in~\cite{Doignon2004} and provides an efficient and practical approach to draw rankings from the Mallows model.  This is because, as was shown in~\cite{Doignon2004}, $\RIM(\bsigma, \Pi)$ is precisely $\mallows(\bsigma, \phi)$ when $\Pi(i, j) = \frac{\phi^{i-j}}{1+\phi+...+\phi^{i-1}}$.

\begin{algorithm}[t!]
	\caption{RIM}
	\begin{algorithmic}[1]
		\REQUIRE $\RIM(\bsigma, \Pi)$, with $\bsigma = \ranking{\lst{\sigma}}$ \\
		\STATE Initialize an empty ranking $\btau = \ranking{}$.
		\FOR {$i = 1, \ldots, m$}
		\STATE Insert $\sigma_i$ into $\btau$ at $j \in [1, i]$ with probability $\Pi(i, j)$.
		\ENDFOR
		\RETURN  $\btau$
	\end{algorithmic}
	\label{alg:rim}
\end{algorithm}
	
The Approximate Mallows Posterior~\cite{DBLP:journals/jmlr/LuB14} $\AMP(\bsigma, \phi, \partialOrder)$, is a \e{sampler from the posterior distribution} of $\mallows(\bsigma, \phi)$ conditioned on a partial order $\partialOrder$.
When sampling a ranking, it follows the procedure of RIM, but the positions to insert items are constrained by $\partialOrder$.
Assume that $\btau_i$ is the current incomplete ranking when inserting $\sigma_i$.
Let $J$ denote the range of positions where inserting $\sigma_i$ does not violate $\partialOrder$.
Item $\sigma_i$ is inserted at $j \in J$ with probability $p_j\propto \phi^{i-j}$.

\begin{example}
	$\AMP(\ranking{a, b, c}, \phi, \set{c \succ a})$ generates ranking $\btau' {=}\ranking{b, c, a}$ as follows. 
	Initialize an empty ranking $\btau_0 {=} \ranking{}$. 
	At step 1, $\btau_1 {=} \ranking{a}$ by inserting $a$ into $\btau_0$. 
	At step 2, $\btau_2 {=} \ranking{b, a}$ by inserting $b$ at position 1 with probability $\frac{\phi}{1 + \phi}$.
	At step 3, $c$ must be placed before $a$, so $J {=} \set{1, 2}$. Consider that $p_1 \propto \phi^2$, $p_2 \propto \phi$, and $p_1 {+} p_2 {=} 1$. So $\btau' {=} \ranking{b, c, a}$ by inserting $c$ at position 2 with probability $p_2 {=} \frac{\phi}{\phi + \phi^2}$. The probability of sampling $\btau_0$ is $\Pr(\btau' \mid \ranking{a, b, c}, \phi, \set{c \succ a}) = \frac{\phi}{1 + \phi} \cdot \frac{\phi}{\phi + \phi^2} = \frac{\phi}{(1 + \phi)^2}$.
\end{example}

\subsection{Labeled RIM Matching}
\label{sec:preliminaries:label_matching}

We now  recall \e{labeled RIM matching}~\cite{DBLP:conf/pods/KenigKPS17}, an inference problem that  will be useful for query evaluation later.
A \e{labeled RIM}, denoted by $\LRIM(\bsigma,\Pi, \lambda)$, augments $\RIM(\bsigma,\Pi)$ with a labeling function $\lambda$, mapping each item to a finite set of its associated labels.
Let $\btau$ be a ranking of length $m$ generated by  $\LRIM(\bsigma,\Pi, \lambda)$.
An \e{embedding} of a label pattern $g$ in $\btau$ is a function $\embedding:nodes(g)\rightarrow [1, m]$  satisfying the conditions:
\begin{enumerate}
	\itemsep -0.3em
	\item Labels match: $\forall l\in nodes(g), l \in \lambda(\btau(\embedding(l)))$
	\item Edges match: $\forall (l, l') \in edges(g), \btau(\embedding(l)) \succ_{\btau} \btau(\embedding(l'))$
\end{enumerate}
If such embedding function $\embedding$ exists, we say that $\btau$ (w.r.t. $\lambda$) \e{matches} (or \e{satisfies}) $g$, denoted by $(\btau, \lambda) \models g$.
When $\lambda$ is clear from context, we write  $\btau \models g$.
The items selected by the embedding function are the \e{matching items}.

\begin{example}
	Given a ranking $\btau_0 = \langle \val{Trump}, \val{Clinton},$ $\val{Sanders}, \val{Rubio} \rangle$, the labeling function $\lambda_0$ in Figure~\ref{fig:elections}, and the pattern $g_0$ in Figure~\ref{fig:pattern_example}, there exists an embedding function $\embedding_0 = \set{\val{F} \mapsto 2, \val{M} \mapsto 3}$, with matching items $\btau_0(2){=}\val{Clinton}$ for label \val{F}, and $\btau_0(3){=}\val{Sanders}$ for \val{M}.
	The edge $(\val{F}, \val{M})$ matches $\val{Clinton} \succ_{\btau_0} \val{Sanders}$,
	and so $(\btau_0, \lambda_0) \models g_0$ with $\embedding_0$.
\end{example}

The problem of \e{pattern matching} on labeled RIM is as follows.
Given $\LRIM(\bsigma, \Pi, \lambda)$ and a pattern $g$, compute the probability that a random ranking $\btau \sim \RIM(\bsigma, \Pi)$ satisfies $g$ (w.r.t. $\lambda$).
This is also the marginal probability of $g$ over $\LRIM(\bsigma, \Pi, \lambda)$:
\begin{equation}
	\Pr(g \mid \bsigma, \Pi, \lambda) = \sum_{\substack{\btau \in rnk(A) \\ (\btau, \lambda) \models g}} {\Pr(\btau \mid \bsigma, \Pi)}
\end{equation}
where $rnk(A)$ is the set of all $m!$ rankings over items $A$.

\section{Query Evaluation}
\label{sec:overview}

In this section, we explain query evaluation in a RIM-PPD and refine the problem statement given in Section~\ref{sec:intro}.

\subsection{Conjunctive Query Evaluation}
\label{sec:overview:cq}

Given a Conjunctive Query (CQ) expressing preferences with a p-relation, if all atoms of p-relation refer to the same session, this query is a \e{sessionwise} CQ.
If the sessionwise CQ is equivalent to a label pattern over each session, this is an \e{itemwise} CQ. 
Otherwise, a \e{non}-\e{itemwise} CQ.

In a recent paper, we showed how to reduce query evaluation of itemwise CQs to labeled RIM matching, and developed a solver for this inference problem, called Lifted Top Matching (LTM)~\cite{DBLP:conf/sigmod/CohenKPKS18}.
Given an itemwise CQ $Q$ and a RIM-PPD $D$, we wish to compute the marginal probability that $Q$ is satisfied.
Under the assumption that there are $n$ independent sessions $\set{\lst[n]{\tup{s}}}$ in a p-relation, we can evaluate $Q$ over each session and aggregate the results from all sessions as follows:
\[
\Pr(Q \mid D) = 1 - \prod_{i=1}^{n}(1 - \Pr(Q \mid \tup{s}_i))
\]

Thus, query evaluation is reduced to evaluating the query over each session.
For a particular session $\tup{s}$, we denote by $\RIM(\bsigma^\tup{s}, \Pi^\tup{s})$ its RIM model, by $\lambda$ the labeling function of database $D$, and by $g$ the label pattern corresponding to $Q$ (as defined in Section~\ref{sec:preliminaries:pref}), which leads to the labeled RIM matching problem in Section~\ref{sec:preliminaries:label_matching}.
Let $\LRIM(\bsigma^\tup{s}, \Pi^\tup{s}, \lambda)$ denote the labeled RIM over session $\tup{s}$.
The probability that $Q$ holds on session $\tup{s}$ is the marginal probability of $g$ over $\LRIM(\bsigma^\tup{s}, \Pi^\tup{s}, \lambda)$.
\[
\Pr(Q \mid \tup{s}) = \Pr(g \mid \bsigma^\tup{s}, \Pi^\tup{s}, \lambda) = \sum_{\substack{\btau \in rnk(A) \\ (\btau, \lambda) \models g}}  {\Pr(\btau \mid \bsigma^\tup{s}, \Pi^\tup{s})}
\]
LTM calculates this probability with complexity $O(2^q m^q)$, where $q$ is the number of nodes in $g$, see~\cite{DBLP:conf/sigmod/CohenKPKS18} for details.

Non-itemwise CQs are the sessionwise CQs with some variable(s) preventing label pattern reduction.
In contrast to itemwise CQs, for which query evaluation has polynomial-time data complexity, the evaluation of non-itemwise CQs is \#P-hard~\cite[Theorems 4.4 and 4.5]{DBLP:conf/sigmod/CohenKPKS18}.
To evaluate a non-itemwise CQ, we ground its variables, and rewrite it into a union of itemwise CQs.
Let $\ungroundedVariables(Q)$ denote the set of variables to ground.
Algorithm~\ref{alg:decomposition} decomposes a non-itemwise CQ $Q$ into a union of itemwise CQs by grounding these variables in $\ungroundedVariables(Q)$.
For example, $\qTwoIntro$ in Section~\ref{sec:intro} is non-itemwise due to variable $e$.
So $\ungroundedVariables(\qTwoIntro)=\{e\}$ and $\qTwoIntro = \qTwoIntro^{\val{BS}} \cup \qTwoIntro^{\val{JD}}$.
Note that these CQs are neither disjoint nor independent.
For each session in a RIM-PPD, a union of itemwise CQs is equivalent to a union of label patterns, and the probability of $Q$ is the sum of the probabilities of rankings that satisfy at least one pattern in the union.

\begin{algorithm}[t!]
	\caption{DecomposeQuery}
	\begin{algorithmic}[1]
	\REQUIRE Database $D$, non-itemwise query $Q$\\
		\STATE Calculate $\ungroundedVariables(Q)$, the set of variables to ground.
        \STATE Calculate $Doms$, the domains of $\ungroundedVariables(Q)$ in $D$.
        \STATE $U = \emptyset$
        \FOR {$\nu$ in CartesianProduct($Doms$)}
            \STATE $\nu$ maps each variable to a value in its domain.
            \STATE Generate $Q_{\nu}$ by instantiating $Q$ with $\nu$.
            \STATE $U = U \cup Q_{\nu}$
		\ENDFOR
		\RETURN $U$
	\end{algorithmic}
	\label{alg:decomposition}
\end{algorithm}

\subsection{Beyond Conjunctive Queries}
\label{sec:overview:count}

\medskip
\noindent\textbf{Count-Session.} A Boolean CQ $Q$ computes the probability that $Q$ is satisfied in a random possible world, while a Count-Session query, denoted $\countq(Q)$, computes the number of sessions satisfying $Q$.
Since RIM-PPDs are probabilistic, $\countq(Q)$ is evaluated under the possible world semantics, and corresponds to the {\em expectation} of $\countq(Q)$ over the distribution of possible worlds. 

Let $S = \{ \lst[n]{\tup{s}}\}$ denote $n$ sessions in a p-relation. The expectation of $\countq(Q)$ is the sum of the probabilities that the sessions satisfy $Q$: $\countq(Q) = \sum_{i=1}^{n} \Pr(Q|\tup{s_i})$.

\medskip
\noindent\textbf{Most-Probable-Session.}
For a Boolean CQ $Q$ and an integer $k$, a Most-Probable-Session query, denoted $\topk(Q, k)$, finds $k$ sessions in which $Q$ is satisfied with the highest probability.
We implement two strategies for this operator.
The first calculates $\Pr(Q)$ for each session, then selects $k$ most supportive sessions.
The second strategy, named \e{top-$k$ optimization}, first quickly calculates the upper bounds for all sessions, and then calculates the exact probability of sessions in descending order of their upper bounds,  stopping once there are at least $k$ sessions whose exact probability is no lower than the highest remaining upper-bound. 

We  will present an approach to compute the upper-bound of any pattern union using a {\em bipartite solver} that  implements the top-$k$ optimization in  Section~\ref{sec:approx:ub}.
This approach constructs a new pattern union $G'$ with selected edges from the original $G$. To derive a tight upper-bound, we want to keep the edges that are hardest to satisfy.	We first calculate all possible edges in $G$ by transitive closure, then select edges using the following heuristic.

Let $\alpha(l \mid \btau)$ be the minimum position (highest rank) of items with label $l$ in a ranking $\btau$, and let $\beta(l \mid \btau)$ be the maximum position (lowest rank).
The {\em ease} of an edge $(l, l')$ to be satisfied by a random permutation from $\mallows(\bsigma, \phi)$ is estimated by:
\[\mathtt{ease}(l, l' \mid \bsigma) = \beta( l' \mid \bsigma) - \alpha( l \mid \bsigma)\]
We construct $G'$ with edges of small $\mathtt{ease}$ values. 
If only one edge is selected for each pattern, $G'$ is a union of two-label patterns, and $\topk(Q, k)$ invokes the two-label solver (see Section~\ref{sec:exact:2label}).
Otherwise, $G'$ is a union of bipartite patterns, and the bipartite solver is invoked (see Section~\ref{sec:exact:bipartite}).

Exact solvers have complexity exponential in the number of labels, so $G'$ is much faster to compute.  Because fewer labels and fewer edges lead to fewer constraints,  more permutations  satisfy $G'$, and so $\Pr(G'|\bsigma, \Pi) \geq \Pr(G|\bsigma, \Pi)$.

\subsection{Problem Statement}
\label{sec:overview:summary}
Queries in this paper include non-itemwise CQs, Count-Session queries, and Most-Probable-Session queries.
The evaluation of these hard queries is reduced to a generalized inference problem of labeled RIM matching: given a pattern union $G = g_1 \cup \ldots \cup g_\numPatterns$, compute its marginal probability over $\LRIM(\bsigma, \Pi, \lambda)$:
\begin{equation}
\label{eq:inference_of_pattern_union}
\Pr(G \mid \bsigma, \Pi, \lambda) = \sum_{\substack{\btau \in rnk(A) \\ \exists g \in G, (\btau, \lambda) \models g}}  {\Pr(\btau \mid \bsigma, \Pi)}
\end{equation}
Sections~\ref{sec:exact} and~\ref{sec:approx} will present exact and approximate solvers for this problem, respectively.
\section{Exact solvers}
\label{sec:exact}

Let $\LRIM(\bsigma, \Pi, \lambda)$ be a labeled RIM model with reference ranking $\bsigma=\ranking{\lst{\sigma}}$.
Let $G = g_1 \cup \ldots \cup g_\numPatterns$ be a union of $\numPatterns$ patterns.
We are interested in the marginal probability of $G$ over $\LRIM(\bsigma, \Pi, \lambda)$ defined in Equation~\eqref{eq:inference_of_pattern_union}.

Equation~\eqref{eq:inference_of_pattern_union} needs to enumerate $m!$ permutations. In this section, we will propose more efficient approaches.

\subsection{General Solver}
\label{sec:exact:general}

The general solver applies inclusion-exclusion principle:
\begin{equation}
\label{eq:inclusion_exclusion_principle}
\begin{split}
\Pr(G \mid \bsigma, \Pi, \lambda) 
& = \Pr(g_1 \cup \ldots \cup g_\numPatterns \mid \bsigma, \Pi, \lambda) \\
& =  \sum_{i=1}^\numPatterns \Pr(g_i \mid \bsigma, \Pi, \lambda) \\
& - \sum_{1 \leq i_1 < i_2 \leq \numPatterns} \Pr(g_{i_1} \wedge g_{i_2} \mid \bsigma, \Pi, \lambda) \\
& + \ldots  \\
& + (-1)^{(\numPatterns-1)}\Pr(g_1 \wedge \ldots \wedge g_\numPatterns \mid \bsigma, \Pi, \lambda)
\end{split}
\end{equation}
where the conjunction $g_i \wedge \ldots \wedge g_j$ is a pattern containing all nodes and edges in $\set{g_i, ..., g_j}$.

\begin{example}
	Let $G = g_1 \cup g_2$ where $g_1 = \set{l_1 \succ l_2}$ and $g_2 = \set{l_3 \succ l_4}$. Its marginal probability over $\LRIM(\bsigma, \Pi, \lambda)$ is $\Pr(g_1 \mid \bsigma, \Pi, \lambda) + \Pr(g_2 \mid \bsigma, \Pi, \lambda) - \Pr(g_3 \mid \bsigma, \Pi, \lambda)$ where $g_3 = g_1 \wedge g_2 = \set{l_1 \succ l_2, l_3 \succ l_4}$.
\end{example}

The RIM inference problem for pattern unions has been reduced to a RIM inference problem for patterns, which can be solved by the LTM solver~\cite{DBLP:conf/sigmod/CohenKPKS18}. 
The complexity of LTM is $O(2^q m^q)$, where $m$ is the number of items in $\bsigma$ and $q$ is the number of nodes in one pattern~\cite{DBLP:conf/sigmod/CohenKPKS18}. 
The complexity of the general solver is dominated by the largest  pattern conjunction $g_1 \wedge g_2 \wedge \ldots \wedge g_\numPatterns$. 
Assuming that each $g_i$ has $q$ nodes, the general solver runs in $O((2m)^{q \cdot \numPatterns})$.  
We use this solver as a baseline in our experiments.

\subsection{Two-label Solver}
\label{sec:exact:2label}

A  common class of queries concerns analysis of preferences over a pair of items.  Such queries are reduced to a union of \e{two-label} patterns, and we call them \e{two-label} queries.
For example, $\qTwoIntro$ in Section~\ref{sec:intro} is a two-label query: $\qTwoIntro() \leftarrow P(\_,\_;c_1;c_2), C(c_1,\val{D},\_,\_,e,\_), C(c_2,\val{R},\_,\_,e,\_)$.
By instantiating $e$ with $\val{BS}$ and $\val{JD}$, $\qTwoIntro$ is reduced to a pattern union $G = g_1 \cup g_2$, where $g_1 = \{\{\val{D}, \val{BS}\} \succ \{\val{R}, \val{BS}\}\}$ and $g_2 = \{\{\val{D}, \val{JD}\} \succ \{\val{R}, \val{JD}\}\}$ are both two-label patterns. 

Since all patterns in $G$ only have two labels, we re-write $G = g_1 \cup \ldots \cup g_\numPatterns = \bigcup_{i=1}^\numPatterns \{ l_i \succ r_i \}$.
The labels $\set{\lst[\numPatterns]{l}}$ are the L-type labels, while $\set{\lst[\numPatterns]{r}}$ R-type.

Instead of calculating the probability that $G$ is satisfied, the two-label solver calculates the probability that $G$ is violated. Let $\btau \not\models g$ and $\btau \not\models G$ denote that a permutation $\btau$ violates a pattern $g$ and a pattern union $G$, respectively. Then $\btau \not\models G$ \ifff $\forall g_i \in G, \btau \not\models g_i$.
Let $\alpha(l)$ be the minimum position (highest rank) of items with label $l$ in a ranking, while $\beta(l)$ the maximum position (lowest rank).
These are the Min/Max positions of a label in a ranking.
Given a two-label pattern $g = \{ l \succ r \}$ and a ranking $\btau$, we can check whether $\btau \models g$ by the Min/Max positions of labels.  Namely, $\btau \models g$ if $\alpha(l) < \beta(r)$ and $\btau \not\models g$ if $\alpha(l) \geq \beta(r)$.

Algorithm~\ref{alg:two_label} presents the two-label solver.
It first calculates the complementary event of $G$ by dynamic programming during RIM insertions.
States are in the form of $\seq{\alpha, \beta}$, tracking Min positions for L-type labels and Max positions for R-type labels.
States in $\mathcal{P}_i$ are generated by inserting item $\sigma_i$ into the states in $\mathcal{P}_{i-1}$.
Let $\seq{\alpha_\iToj, \beta_\iToj}$ denote a new state generated by inserting item $\sigma_i$ into $\seq{\alpha, \beta}$ at position $j$; 
$\seq{\alpha_\iToj, \beta_\iToj}$ is updated from $\seq{\alpha, \beta}$ as follows:
\begin{itemize}
    \itemsep -0.3em
    \item $\alpha_\iToj(l) = min(\alpha(l), j)$ if $l \in \lambda(\sigma_i)$ and $l$ is L-type;
    \item $\beta_\iToj(l) = max(\beta(l), j)$ if $l \in \lambda(\sigma_i)$ and $l$ is R-type;
    \item $\alpha_\iToj(l) = \alpha(l) + 1$ if $l \notin \lambda(\sigma_i)$ and $\alpha(l) \geq j$;
    \item $\beta_\iToj(l) = \beta(l) + 1$ if $l \notin \lambda(\sigma_i)$ and $\beta(l) \geq j$.
\end{itemize}
The algorithm only tracks the states that violate $G$, and its complexity is $O(m^{2\numPatterns+1})$.

\begin{algorithm}[t!]
	\caption{TwoLabelSolver}
	\begin{algorithmic}[1]
		\REQUIRE $\LRIM(\bsigma, \Pi, \lambda)$, $G = \bigcup_{i=1}^\numPatterns \{ l_i \succ r_i \}$ \\
		\STATE $\mathcal{P}_0 {\defeq} \set{\seq{\set{}, \set{}}}, q_0 {\defeq} \set{\seq{\set{}, \set{}} \mapsto 1}$
		\FOR {$i=1, .., m$}
			\STATE $\mathcal{P}_i \defeq \set{}$
			\FOR {$\seq{\alpha, \beta} \in \mathcal{P}_{i-1}$}
				\FOR {$j = 1, ..., i$}
					\STATE Generate a new state $\seq{\alpha_\iToj, \beta_\iToj}$ by inserting $\bsigma_i$ into $\seq{\alpha, \beta}$ at position $j$, and updating Min/Max positions according to the labeling function $\lambda$.
					\IF {$\seq{\alpha_\iToj, \beta_\iToj} \not\models G$}
						\STATE $\mathcal{P}_i.add(\seq{\alpha_\iToj, \beta_\iToj})$
						\STATE $q_i(\seq{\alpha_\iToj, \beta_\iToj}) \pluseq  q_{i-1}(\seq{\alpha, \beta}) \cdot \Pi(i, j)$ \label{alg:two_label:q_i}
					\ENDIF
				\ENDFOR
			\ENDFOR
		\ENDFOR
		\RETURN  $1 - \sum_{\seq{\alpha, \beta} \in \mathcal{P}_m} {q_m(\seq{\alpha, \beta})}$
	\end{algorithmic}
	\label{alg:two_label}
\end{algorithm}

\begin{example}
	Let $\LRIM(\bsigma_0, \Pi_0, \lambda_0)$ be a labeled RIM with $\bsigma_0 = \ranking{a,b,c}$.
	Let $G = g_1 \cup g_2$ be a pattern union. We will focus on $g_1$ in this example.
	Let $g_1 = \set{l_1 \succ r_1}$.
	Assume that $\lambda_0$ associates items $a$ and $c$ with label $l_1$, and $b$ with label $r_1$.
	At step 1, insert $a$ and generate state $\seq{\alpha_1, \beta_1}$ with probability $q_1(\seq{\alpha_1, \beta_1}) = \Pi_0(1,1) = 1$, where $\alpha_1 = \set{l_1 \mapsto 1}$ and $\beta_1 = \set{}$.
	At step 2,  $b$ must be inserted before $a$ to violate $g_1$.
	So $\alpha_2 = \set{l_1 \mapsto 2}$, $\beta_2 = \set{r_1 \mapsto 1}$, and $q_2(\seq{\alpha_2, \beta_2}) = q_1(\seq{\alpha_1, \beta_1}) \cdot \Pi_0(2,1) = \Pi_0(2,1)$.
	At step 3,  $c$ must be inserted after $b$ to violate $g_1$.
	So $\beta_3 = \set{r_1 \mapsto 1}$.
	Item $c$ can be inserted either before item $a$ generating $\alpha_3(l_1) = min(\alpha_2(l_1), 2)=2$ with probability $\Pi_0(3,2)$, or after $a$ generating $\alpha_3(l_1) = min(\alpha_2(l_1), 3)=2$ with probability $\Pi_0(3,3)$. 
	Both scenarios generate the same $\alpha_3(l_1)=2$, thus their probabilities are merged by $q_3(\seq{\alpha_3, \beta_3}) = q_2(\seq{\alpha_2, \beta_2}) \cdot \Pi_0(3, 2) + q_2(\seq{\alpha_2, \beta_2}) \cdot \Pi_0(3, 3) = \Pi_0(2,1) \cdot (\Pi_0(3, 2) + \Pi_0(3, 3))$.
\end{example}

\begin{theorem}
Given $\LRIM(\bsigma, \Pi, \lambda)$ and a union of two-label patterns $G$, Algorithm~\ref{alg:two_label} returns $\Pr(G \mid \bsigma, \Pi, \lambda)$, the marginal probability of $G$ over $\LRIM(\bsigma, \Pi, \lambda)$.
\end{theorem}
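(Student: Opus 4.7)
My plan is to prove correctness in three stages, each corresponding to one design choice in Algorithm~\ref{alg:two_label}: the choice of state, the DP recurrence, and the pruning rule.

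First I would establish that the pair $\seq{\alpha,\beta}$ is a sufficient statistic for the event $\btau \not\models G$. By the discussion preceding the algorithm, for a single two-label pattern $g_i = \set{l_i \succ r_i\}}$ we have $\btau \models g_i$ iff $\alpha(l_i) < \beta(r_i)$, so $\btau \not\models G$ iff $\alpha(l_i) \ge \beta(r_i)$ for every $i \in [1,\numPatterns]$. Thus only the Min/Max positions of L-type and R-type labels matter, and two partial rankings with the same $\seq{\alpha,\beta}$ behave identically with respect to $G$ under any future extension.

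Next, I would prove by induction on $i$ the invariant that for every $s = \seq{\alpha,\beta}\in\mathcal{P}_i$,
\[
q_i(s) \;=\; \sum_{\btau_i} \Pr(\btau_i \mid \bsigma,\Pi),
\]
where the sum ranges over all length-$i$ partial rankings produced by Algorithm~\ref{alg:rim} (truncated after step $i$) that have state $s$ \emph{and} do not yet satisfy any $g_j \in G$. The base $i=0$ is trivial. For the inductive step I would verify the four update rules: when $\sigma_i$ is inserted at position $j$, positions of items already placed at positions $\ge j$ shift by one, which gives the `$+1$' cases for labels $l \notin \lambda(\sigma_i)$; the new item itself contributes a fresh occurrence of each of its labels at position $j$, which gives the $\min$/$\max$ updates for $l \in \lambda(\sigma_i)$; and labels unaffected by either mechanism are unchanged. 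Because Algorithm~\ref{alg:rim} realizes $\btau_i$ by a unique sequence of insertions, summing $q_{i-1}(\seq{\alpha,\beta}) \cdot \Pi(i,j)$ over all predecessor states and positions that collapse to the same successor (exactly what line~\ref{alg:two_label:q_i} of Algorithm~\ref{alg:two_label} does via $\pluseq$) gives the claimed aggregate.

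Finally I would justify the pruning. If a length-$i$ prefix $\btau_i$ already satisfies some $g_j$, the witnessing pair of positions is preserved in every full extension $\btau$ of $\btau_i$, so $\btau \models G$ for every such completion. Hence discarding $\btau_i$ from $\mathcal{P}_i$ loses only probability mass that would have contributed to $\Pr(\btau \models G)$, not to $\Pr(\btau \not\models G)$. Combining this with the invariant at $i=m$ yields
\[
\sum_{\seq{\alpha,\beta}\in\mathcal{P}_m} q_m(\seq{\alpha,\beta}) \;=\; \Pr(\btau \not\models G \mid \bsigma, \Pi, \lambda),
\]
so the algorithm's return value equals $\Pr(G \mid \bsigma, \Pi, \lambda)$. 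The main obstacle I anticipate is the bookkeeping in the inductive step: one must verify the update rules case-by-case for every combination of whether $\sigma_i$ carries a given L-type or R-type label and whether the existing Min/Max position lies below, at, or above $j$, and confirm that no case is missed. Once those cases are laid out, the pruning argument and the overall aggregation follow by standard DP reasoning.
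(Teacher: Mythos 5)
Your proposal is correct and follows essentially the same route as the paper's proof: a dynamic program over RIM insertion steps whose states are the Min/Max positions $\seq{\alpha,\beta}$, an inductive invariant that $q_i$ aggregates the probabilities of length-$i$ prefixes that do not yet satisfy any pattern, the observation that satisfaction of a pattern is preserved under all future insertions (justifying pruning), and the final complementation $1-\sum_{\seq{\alpha,\beta}\in\mathcal{P}_m} q_m(\seq{\alpha,\beta})$. Your phrasing of the state as a sufficient statistic and the explicit per-state invariant is a slightly cleaner packaging of the same argument, with no substantive difference.
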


\begin{proof}
	According to Equation~\ref{eq:inference_of_pattern_union}, the marginal probability of $G$ over $\LRIM(\bsigma, \Pi, \lambda)$ is the sum of the probabilities of all rankings that satisfy $G$.
	Algorithm~\ref{alg:two_label} first calculates its negation that is the sum of the probabilities of all rankings that violate $G$.
		
	Recall that $G = g_1 \cup \ldots \cup g_\numPatterns = \bigcup_{i=1}^\numPatterns \{ l_i \succ r_i \}$, and a ranking $\btau \not\models G$ if and only if $\forall g_i \in G, \btau \not\models g_i$.
	From the perspective of Min/Max conditions, $\btau \not\models g_i$ means that $\alpha(l_i) \geq \beta(r_i)$.
	As a result, Algorithm~\ref{alg:two_label} only tracks the Min/Max positions of labels for the generated rankings during RIM insertions, and groups rankings sharing the same Min/Max positions of labels into a state $\seq{\alpha, \beta}$.
	At step $i$ of RIM insertions, $\mathcal{P}_i$ is the set of states that violate $G$, and $q_i(\seq{\alpha, \beta})$ represents the sum of probabilities of the generated rankings of length $i$ included in $\seq{\alpha, \beta}$.
	Note that once a state can satisfy $G$ at step $i$, it will always satisfy $G$ in the future with the same matching items at step $i$.
	So the algorithm only tracks states that violate $G$, and prunes states that satisfy $G$.
	We prove correctness of Algorithm~\ref{alg:two_label} by induction.
	
	The algorithm starts with an empty state $\seq{\set{}, \set{}}$, since no item is inserted yet.
	It is associated with the probability 1, meaning that no ranking or state was pruned yet.
	
	At step 1, item $\sigma_1$ is inserted into an empty ranking represented by $\seq{\set{}, \set{}}$ at position 1 with probability $\Pi(1,1)=1$.
	A state $\seq{\alpha_1, \beta_1}$ is generated, and $q_1(\seq{\alpha_1, \beta_1}) = 1$.
	If $\lambda(\sigma_1) = \emptyset$, $\seq{\alpha_1, \beta_1}  = \seq{\set{}, \set{}}$; Otherwise, $\forall l \in \lambda(\sigma_1)$, $\alpha_1(l) = 1$ if $l$ is L-type, and $\beta_1(l) = 1$ if $l$ is R-type.
	Only one ranking $\ranking{\sigma_1}$ is generated at step 1 and it cannot satisfy $G$.
	So the state in $\mathcal{P}_1 = \set{\seq{\alpha_1, \beta_1}}$ includes every ranking over the first item $\sigma_1$ that violates $G$.
	
	At step $i$, the algorithm reads states from $\mathcal{P}_{i-1}$ and their probabilities from $q_{i-1}$.
	These states are over the first $(i-1)$ items in $\bsigma$, denoted by $A(\bsigma^{i-1})$. 
	Assume that the states in $\mathcal{P}_{i-1}$ include all rankings over $A(\bsigma^{i-1})$ that violate $G$, and that the probabilities in $q_{i-1}$ are correct.
	Note that any ranking $\btau$ over $A(\bsigma^i)$ violating $G$ can be generated by inserting $\sigma_i$ into $\btau_{-\sigma_i}$, a ranking with $\sigma_i$ removed from $\btau$, over $A(\bsigma^{i-1})$ that also violates $G$.
	Inserting $\sigma_i$ into every state of $\mathcal{P}_{i-1}$ at every possible position $j$ will generate all states required by $\mathcal{P}_i$.
	
	Let $\seq{\alpha_\iToj, \beta_\iToj}$ denote the new state generated by inserting $\sigma_i$ at position $j$ into state $\seq{\alpha, \beta} \in \mathcal{P}_{i-1}$. 
	The values of $\alpha_\iToj$ and $\beta_\iToj$ should be updated according to the algorithm description in order to reflect the Min/Max positions correctly, so that the algorithm can determine whether $\seq{\alpha_\iToj, \beta_\iToj}$ satisfies $G$.
	If so, this state is pruned.
	Otherwise, it is added into $\mathcal{P}_i$, and its probability is also tracked by $q_i$.
	Recall that $\seq{\alpha, \beta}$ represents a collection of rankings of the same Min position mappings $\alpha$ and Max position mappings $\beta$, and $q_{i-1}(\seq{\alpha, \beta})$ is the sum of the probabilities of these rankings.
	
	Assume that there are $N$ rankings $\set{\lst[N]{\btau}}$ in this collection. Then 
	\[q_{i-1}(\seq{\alpha, \beta}) = \sum_{k=1}^{N} {\Pr(\btau_k)}\]
	and
	\begin{equation*}
	\begin{split}
	\Pr(\seq{\alpha_\iToj, \beta_\iToj}) 
	& = \sum_{k=1}^{N} {\big( \Pr(\btau_k) \cdot \Pi(i, j) \big)}\\
	& = q_{i-1}(\seq{\alpha, \beta}) \cdot \Pi(i, j)
	\end{split}
	\end{equation*}
	Note that multiple states in $\mathcal{P}_{i-1}$ may generate the same new state when inserting $\sigma_i$ at different positions.
	So Algorithm~\ref{alg:two_label} accumulates $\Pr(\seq{\alpha_\iToj, \beta_\iToj})$ into $q_i(\seq{\alpha_\iToj, \beta_\iToj})$ (Line~\ref{alg:two_label:q_i}).
	
	After iterating all states in $\mathcal{P}_{i-1}$ and all positions $j \in \set{1, \ldots, i}$, $\mathcal{P}_i$ includes all states that are over $A(\bsigma^i)$ and violate $G$, and the probabilities in $q_i$ are also correct.
	
	At step $m$, all items are inserted, so all rankings that violate $G$ have been included in the states of $\mathcal{P}_m$. Then $\Pr(G|\bsigma, \Pi) = 1 - \sum_{\seq{\alpha, \beta} \in \mathcal{P}_m} {q_m(\seq{\alpha, \beta})}$.
\end{proof}

\subsection{Bipartite Solver}
\label{sec:exact:bipartite}

A bipartite pattern is similar to a bipartite graph.
The nodes are classified into two sets $L$ and $R$, such that all directed edges are in the form $(l, r), l \in L, r \in R$.
Labels in $L$ and $R$ are L-type and R-type, respectively.

With the definition of $\alpha$ and $\beta$ in Section~\ref{sec:exact:2label}, an edge $(l, r)$ in a bipartite pattern is essentially $\alpha(l) < \beta(r)$.
A ranking satisfies a bipartite pattern $g$ if it satisfies all Min/Max constraints specified by $g$.

For a union of bipartite patterns $G = g_1 \cup \ldots \cup g_\numPatterns$, the  solver tracks $\alpha$ for L-type labels and $\beta$ for R-type labels. A permutation satisfies $G$ if it satisfies any pattern $g \in G$.

\subsubsection{Algorithm Description}
The basic version of a bipartite solver works as follows.
It is a Dynamic Programming  algorithm that tracks the minimum positions of L-type labels and the maximum positions of R-type labels, during RIM insertion process.
At step $i$, the first $i$ items in $\bsigma$ are inserted, and $i!$ rankings are generated accordingly.
These rankings are grouped into states in the form of $\seq{\alpha, \beta}$ where $\alpha$ maps L-type labels to their minimum positions and $\beta$ maps R-type labels to their maximum positions.
After all items are inserted, enumerate all states and add up the probabilities of the states satisfying at least one pattern $g_i \in G$.
The complexity of this algorithm is $O(m^{q \numPatterns})$, where $m$ is the number of items in $\bsigma$, $q$ is the number of labels per pattern, and $\numPatterns$ is the number of patterns in $G$.

The more sophisticated version of bipartite solver dynamically prunes labels tracked by states based on the ``situations'' of patterns and edges. The ``situations'' are \{satisfied, violated, uncertain\}.
An edge $(l, r)$ is satisfied if $\alpha(l) < \beta(r)$; violated if $\alpha(l) \geq \beta(r)$ after all items in $l$ and $r$ are inserted; uncertain if it is neither satisfied nor violated.
A pattern is satisfied if all its edges are satisfied; violated if any of its edges are violated; and uncertain otherwise.

The key observation is that once an edge is satisfied by a state, this state will always satisfy this edge in the future. The same is true for an edge being violated, a pattern being satisfied, and a pattern being violated. This  enables several optimization opportunities:
\begin{itemize}
\itemsep -0.3em
\item An edge is satisfied: no need to track this edge.
\item An edge is violated: the entire pattern is violated, no need to track this pattern.
\item A pattern is satisfied: the pattern union $G$ is satisfied, add the probability of this state into the marginal probability, no need to track this pattern.
\item A pattern is violated: no need to track this pattern.  
\end{itemize}

\textbf{In summary}, the bipartite solver only needs to track labels in uncertain edges of uncertain patterns.

\begin{algorithm}[t!]
	\caption{BipartiteSolver}
	\begin{algorithmic}[1]
		\REQUIRE $\LRIM(\bsigma, \Pi, \lambda)$, $G = g_1 \cup \ldots \cup g_\numPatterns$ \\
		\STATE $\mathcal{P}_0 {\defeq} \set{\seq{\set{}, \set{}}}$, $\mathcal{E}_0 {\defeq} \set{\seq{\set{}, \set{}} \mapsto G}$, \\ $q_0 {\defeq} \{ \seq{\set{}, \set{}} \mapsto 1 \}$
        \STATE $prob \defeq 0$
		\FOR {$i=1, .., m$}
    		\STATE $\mathcal{P}_i \defeq \set{}$
    		\FOR {$\seq{\alpha, \beta} \in \mathcal{P}_{i-1}$}
                \STATE $G_u \defeq \mathcal{E}_{i-1}(\seq{\alpha, \beta})$
        		\FOR {$j = 1, ..., i$}
        		\STATE Generate a new state $\seq{\alpha_\iToj, \beta_\iToj}$ by inserting $\bsigma_i$ into $\seq{\alpha, \beta}$ at position $j$, and updating Min/Max positions according to the labeling function $\lambda$.
        		\IF {$\seq{\alpha_\iToj, \beta_\iToj}$ violates all patterns in $G_u$}
        		  \STATE Ignore $\seq{\alpha_\iToj, \beta_\iToj}$
        		\ELSE
            		\STATE $p' \defeq q_{i-1}(\seq{\alpha, \beta}) \cdot \Pi(i, j)$
            		\STATE $G_u' \defeq$ OnlyTrackUncertainPatterns($G_u$)
            		\IF {$\exists g \in G_u'$, all edges in $g$ are satisfied}
                        \STATE $prob \pluseq p'$
            		\ELSE
                		\STATE $\seq{\alpha_\iToj, \beta_\iToj}$.onlyTrackLabelsFor($G_u'$)
                		\STATE $\mathcal{P}_i.add(\seq{\alpha_\iToj, \beta_\iToj})$
                        \STATE $\mathcal{E}_i(\seq{\alpha_\iToj, \beta_\iToj}) \defeq G_u'$
                        \STATE $q_i(\seq{\alpha_\iToj, \beta_\iToj}) \pluseq p'$
            		\ENDIF
        		\ENDIF
        		\ENDFOR
    		\ENDFOR
		\ENDFOR
		\RETURN  $prob$
	\end{algorithmic}
	\label{alg:bipartite}
\end{algorithm}

Algorithm~\ref{alg:bipartite} presents the bipartite solver that uses RIM (see Section~\ref{sec:preliminaries:rim}) as basis for inference.
At step $i$, it maintains a set of states $\mathcal{P}_i$.
A state $\seq{\alpha, \beta}$ tracks the Min/Max positions of labels.
The $\mathcal{E}_i$ maps a state $\seq{\alpha, \beta}$ to $G_u$, a union of uncertain patterns with uncertain edges in this state. Before running RIM, all patterns and edges are uncertain, so $G_u {=} G$.
The probabilities of the states are tracked by $q_i$.  

Recall that RIM sampling starts with an empty ranking.
Therefore, the initial state is $\seq{\set{}, \set{}}$, and $\mathcal{E}_0(\seq{\set{}, \set{}}) = G$, $q_0(\seq{\set{}, \set{}})=1$.
At step $i$, generate new states by inserting item $\sigma_i$ into states in $\mathcal{P}_{i-1}$.
If a new state already satisfies some pattern, accumulate its probability, otherwise put it into the set $\mathcal{P}_i$.
When a new item $\sigma_i$ is inserted into $\seq{\alpha, \beta}$ at position $j$, update $\seq{\alpha_\iToj, \beta_\iToj}$ as follows:

\begin{itemize}
    \itemsep -0.3em
    \item $\alpha_\iToj(l) = min(\alpha(l), j)$ if $l \in \lambda(\sigma_i)$ and $l$ is L-type.
    \item $\beta_\iToj(l) = max(\beta(l), j)$ if $l \in \lambda(\sigma_i)$ and $l$ is R-type.
    \item $\alpha_\iToj(l) = \alpha(l) + 1$ if $l \notin \lambda(\sigma_i)$ and $\alpha(l) \geq j$.
    \item $\beta_\iToj(l) = \beta(l) + 1$ if $l \notin \lambda(\sigma_i)$ and $\beta(l) \geq j$.
\end{itemize}

\begin{example}
    Let $\LRIM(\bsigma_0, \Pi_0, \lambda_0)$ denote a labeled RIM where $\bsigma_0 = \ranking{a,b,c,d}$.
	Let $G = g_1 \cup g_2$ be a pattern union where $g_1 = \set{l_1 \succ r_1, l_1 \succ r_2}$, on which we will focus right now.
	Assume that item $a$ and $c$ are associated with label $l_1$, while $b$ with label $r_1$, $d$ with label $r_2$, according to $\lambda_0$.  Below are some solver execution scenarios.
    
    (i) At step 1, item $a$ is inserted at position 1 with probability $\Pi_0(1,1)=1$, thus $\alpha_{1 \rightarrow 1}(l_1)=1$.  (ii)If at step 2, item $b$ is inserted before $a$ with probability $\Pi_0(2,1)$, $\beta_{2 \rightarrow 1}(r_1)=1$ and $\alpha_{1 \rightarrow 1}(l_1) + 1 = 2$.  If item $b$ is inserted after $a$ with probability $\Pi_0(2,2)$, $\beta_{2 \rightarrow 1}(r_1)=2$. Edge $(l_1, r_1)$ is already satisfied by this state, so there is no need to track $r_1$ any more. The $G_u$ will have $g_1 = \set{l_1 \succ r_2}$.  (iii)For the state informally represented by $\set{l_1 \mapsto 2, r_1 \mapsto 1}$, if at step 3, item $c$ is inserted after $b$ at position 2 with probability $\Pi_0(3,2)$ or at position 3 with probability $\Pi_0(3,3)$, edge $(l_1, r_1)$ is violated, which leads to pattern $g_1$ getting violated. The $G_u$ will remove $g_1$ and only track $g_2$ later.  (iv)If at step 4, item $d$ is inserted after $a$ or $c$, pattern $g_1$ is satisfied by the new state, then $G$ is satisfied no matter what the ``situation'' of $g_2$ is, and the probability of this state is accumulated into the marginal probability $prob$.
\end{example}

\begin{theorem}
Given $\LRIM(\bsigma, \Pi, \lambda)$ and a union of bipartite patterns $G$, Algorithm~\ref{alg:bipartite} returns $\Pr(G \mid \bsigma, \Pi, \lambda)$, the marginal probability of $G$ over $\LRIM(\bsigma, \Pi, \lambda)$.
\end{theorem}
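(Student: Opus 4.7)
The plan is to mirror the induction used for the two-label solver, but to additionally justify the three pruning rules (satisfied edge, violated pattern, satisfied pattern) that distinguish the bipartite solver. I would maintain the following loop invariant at the end of step $i$: for every ranking $\btau_i$ of $A(\bsigma^i)$ generated with RIM probability $p(\btau_i)$, either (a) some pattern $g \in G$ is already \emph{satisfied} by $\btau_i$ in the sense that every edge $(l,r)$ of $g$ has $\alpha(l \mid \btau_i) < \beta(r \mid \btau_i)$, in which case $p(\btau_i)$ has been added to $prob$, or (b) $\btau_i$ is represented by a unique state $\seq{\alpha,\beta} \in \mathcal{P}_i$, with $\mathcal{E}_i(\seq{\alpha,\beta})$ recording the still-uncertain patterns and edges, and $q_i(\seq{\alpha,\beta})$ equal to the sum of $p(\btau_i)$ over all $\btau_i$ mapped to that state.

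The central lemma underpinning this invariant is a monotonicity claim: for any ranking $\btau$ and any extension $\btau'$ obtained by inserting additional items into $\btau$, we have $\alpha(l \mid \btau') \le \alpha(l \mid \btau) + (\text{items inserted before position }\alpha(l \mid \btau))$ and similarly for $\beta$, and more importantly, once all items carrying labels $l,r$ have been inserted and $\alpha(l) < \beta(r)$, the inequality is preserved under any further insertion (since both values simply shift by the same amount when items are inserted above them). I would prove this in one short paragraph and then use it to justify the three prunings: a satisfied edge stays satisfied in every completion so its labels need not be tracked; a violated edge kills its pattern permanently so the pattern can be dropped from $\mathcal{E}_i$; and a satisfied pattern $g$ guarantees $\btau \models G$ in every completion, so its probability can be moved to $prob$ once and for all.

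The inductive step itself is routine: assuming the invariant at step $i-1$, each state $\seq{\alpha,\beta}$ is expanded by inserting $\sigma_i$ at every position $j \in [1,i]$, producing $\seq{\alpha_\iToj,\beta_\iToj}$ according to the four update rules, which by a direct case analysis correctly records the new Min/Max positions for the rankings represented by the state; multiplying by $\Pi(i,j)$ yields the correct contribution to the new state's probability, and accumulation via $\pluseq$ handles distinct predecessors producing the same successor. Because the three situations (satisfied edge, violated pattern, satisfied pattern) are evaluated on $\seq{\alpha_\iToj,\beta_\iToj}$ with the updated $G_u'$, the invariant is restored at step $i$. At step $m$ every label has been inserted, so no edge can still be uncertain; thus every remaining state in $\mathcal{P}_m$ corresponds to rankings that violate $G$, and all probability mass of rankings satisfying $G$ has been accumulated into $prob$, which is what the algorithm returns.

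The main obstacle I expect is the bookkeeping for the pruning step $\seq{\alpha_\iToj,\beta_\iToj}.\textrm{onlyTrackLabelsFor}(G_u')$: one must argue that two rankings $\btau,\btau'$ that agree on Min/Max positions of every label appearing in some uncertain edge of some uncertain pattern behave identically with respect to future satisfaction of $G$, even though they may disagree on positions of dropped labels. This follows from the monotonicity lemma applied label-by-label, but it needs a careful statement because the set of tracked labels can shrink differently along different branches, so states coming from different predecessors might agree on the projected coordinates yet represent genuinely different extensions; the merge via $\pluseq$ is sound precisely because from that point forward only the projected coordinates influence whether the remaining patterns become satisfied.
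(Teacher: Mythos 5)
Your proposal is correct and follows essentially the same route as the paper's proof: an induction over the RIM insertion steps with the invariant that $prob$ holds the mass of rankings already satisfying $G$ while $\mathcal{P}_i$, $q_i$, $\mathcal{E}_i$ track the remaining rankings grouped by Min/Max label positions, with the three pruning cases handled in the inductive step. The monotonicity claim and the state-projection/merging point you flag are exactly what the paper asserts informally as its ``key observation'' (satisfied/violated edges and patterns stay so under further insertions), so your write-up is the same argument with somewhat more explicit justification of that step.
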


\begin{proof}
	Algorithm~\ref{alg:bipartite} is a search algorithm that targets rankings satisfying at least one pattern $g \in G$. 
	Instead of enumerating all $m!$ rankings in the search space, the algorithm runs RIM and inspects the generated rankings of length $i$ at step $i \in [1, m]$. 
	The generated rankings are grouped by their Min/Max positions of labels into states in the form of $\seq{\alpha, \beta}$.
	The algorithm tracks states that can potentially satisfy $G$. 
	Once a state satisfies $G$, its probability will be accumulated into $prob$, and the algorithm will stop tracking it. 
	At step $i$, $\mathcal{P}_i$ is the set of states that can potentially satisfy $G$, $\mathcal{E}_i(\seq{\alpha, \beta})$ maps $\seq{\alpha, \beta}$ to the uncertain patterns and their uncertain edges for this state, and $q_i(\seq{\alpha, \beta})$ is the sum of probabilities of the rankings included in $\seq{\alpha, \beta}$. 
	We prove correctness of Algorithm~\ref{alg:two_label} by induction.
	
	At step 0, there is only one state $\seq{\set{}, \set{}}$ tracking an empty ranking, since no item is inserted yet. 
	All edges in $G$ are uncertain, and the probability of this state is initialized to be 1, which means that no ranking or state is pruned yet. 
	The $prob=0$ since there is also no ranking or state satisfying $G$ yet. 
	
	At step 1, item $\sigma_i$ is inserted into an empty ranking represented by $\seq{\set{}, \set{}}$ at position 1 with probability $\Pi(1,1)=1$.
	State $\seq{\alpha_1, \beta_1}$ is generated, and $q_1(\seq{\alpha_1, \beta_1}) = 1$.
	If $\lambda(\sigma_1) = \emptyset$, $\seq{\alpha_1, \beta_1}  = \seq{\set{}, \set{}}$; otherwise, $\forall l \in \lambda(\sigma_1)$, $\alpha_1(l) = 1$ if $l$ is L-type, and $\beta_1(l) = 1$ if $l$ is R-type.
	Only one ranking $\ranking{\sigma_1}$ is generated at step 1 and all edges in $G$ still remain uncertain.
	So $\mathcal{E}_1(\seq{\alpha_1, \beta_1}) = G$, $prob=0$, and $\mathcal{P}_1 = \set{\seq{\alpha_1, \beta_1}}$ has included all states over $A(\bsigma^1)$ that can potentially satisfy $G$.

	At step $i$, the algorithm reads states from the previous iteration $\mathcal{P}_{i-1}$, as well as $q_{i-1}$ and $\mathcal{E}_{i-1}$. 
	These states are over the first $(i-1)$ items in $\bsigma$, denoted $A(\bsigma^{i-1})$. 
	Assume that the states in $\mathcal{P}_{i-1}$ include all rankings over $A(\bsigma^{i-1})$ that potentially satisfy $G$, that the corresponding probabilities in $q_{i-1}$ and uncertain patterns in $\mathcal{E}_{i-1}$ are correct, and that current $prob$ is the sum of probabilities of all rankings over $A(\bsigma^{i-1})$ that satisfy $G$.
	Note that any ranking $\btau$ over $A(\bsigma^i)$ can always be generated by inserting $\sigma_i$ into $\btau_{-\sigma_i}$ that is a ranking with $\sigma_i$ removed from $\btau$.
	If $\Pr(\btau_{-\sigma_i})$ is already included in $prob$, ranking $\btau$ will keep satisfying $G$ wherever $\sigma_i$ is inserted.
	If $\btau_{-\sigma_i}$ already violates $G$ at step $(i-1)$, ranking $\btau$ will keep violating $G$ wherever $\sigma_i$ is inserted.
	So any ranking $\btau$ included by states in $\mathcal{P}_i$ must be generated from $\btau_{-\sigma_i}$ included by states in $\mathcal{P}_{i-1}$.
	If a new generated state satisfies $G$, it must also be generated from a state in $\mathcal{P}_{i-1}$.
	Inserting $\sigma_i$ into every state of $\mathcal{P}_{i-1}$ at every possible position $j$ will generate all states required by $\mathcal{P}_i$ and the incremental part of $prob$.

	Let $\seq{\alpha_\iToj, \beta_\iToj}$ denote a new state by inserting $\sigma_i$ into $\seq{\alpha, \beta} \in \mathcal{P}_{i-1}$ at position $j$. 
	The values of $\alpha_\iToj$ and $\beta_\iToj$ should be updated according to the algorithm description in order to reflect the Min/Max positions correctly, so that the algorithm can determine whether $\seq{\alpha_\iToj, \beta_\iToj}$ satisfies $G$.
	The state $\seq{\alpha_\iToj, \beta_\iToj}$ falls into one of the following 3 cases:
    \begin{itemize}
    \itemsep -0.3em
    \item Case 1: $\seq{\alpha_\iToj, \beta_\iToj}$ violates all patterns in $G$. The algorithm prunes this state.
    \item Case 2: 
    			$\seq{\alpha_\iToj, \beta_\iToj}$ satisfies a pattern $g \in G$. 
    			Its probability $\Pr(\seq{\alpha_\iToj, \beta_\iToj})$ is accumulated into $prob$ and the algorithm stops tracking this state. 
    			Recall that $q_{i-1}(\seq{\alpha, \beta})$ is the sum of the probabilities of rankings $\set{\lst[N]{\btau}}$ included by it.
    			Then $\Pr(\seq{\alpha_\iToj, \beta_\iToj}) = \sum_{k=1}^{N} {\big( \Pr(\btau_k) \cdot \Pi(i, j) \big)} = q_{i-1}(\seq{\alpha, \beta}) \cdot \Pi(i, j)$.
    \item Case 3: 
    			$\seq{\alpha_\iToj, \beta_\iToj}$ can still potentially satisfy $G$ in the future, so it is added into $\mathcal{P}_i$.
    			Its probability is calculated the same way as above: $\Pr(\seq{\alpha_\iToj, \beta_\iToj}) = q_{i-1}(\seq{\alpha, \beta}) \cdot \Pi(i, j)$, and tracked by $q_i$.
    			Calculate the uncertain part of $G$ for this state, $\mathcal{E}_i(\seq{\alpha_\iToj, \beta_\iToj})$, with the latest Min/Max positions of labels.
    \end{itemize}
 	Then, after iterating over all states in $\mathcal{P}_{i-1}$ and all positions $j \in \set{1, \ldots, i}$, the states in $\mathcal{P}_{i}$ include all rankings over $A(\bsigma^i)$ that potentially satisfy $G$, and $prob$ is the sum of probabilities of all rankings over $A(\bsigma^i)$ having satisfied $G$.
	The probabilities in $q_i$ and the uncertain patterns in $\mathcal{E}_i$ are also updated correctly.

	At step $m$, all items are inserted, there remain no uncertain states in $\mathcal{P}_m$, and $prob$ includes the probability of all rankings that satisfy some pattern in $G$.
\end{proof}

\subsubsection{Bipartite Solver for Upper Bounds}
\label{sec:approx:ub}

Let $tc(g)$ be the transitive closure of pattern $g$.
Each edge $(l, r) \in tc(g)$ represents a constraint $\alpha(l) < \beta(r)$.
Let $U$ denote the set of these constraints.
By the definition of \e{label embedding}, any ranking $\btau$ satisfying $g$ must satisfy $U$, denoted by $\btau \models U$, so $U$ gives an upper bound of $g$.
\begin{example}
    Let $g_0 = \set{l_a \succ l_b, l_b \succ l_c}$, a linear order $l_a \succ l_b \succ l_c$.
    Then $tc(g_0) = \set{l_a \succ l_b, l_b \succ l_c, l_a \succ l_c}$, and $U_0 = \set{\alpha(l_a) < \beta(l_b), \alpha(l_b) < \beta(l_c), \alpha(l_a) < \beta(l_c)}$ accordingly. 
    If a ranking $\btau_0 \models g_0$, $\btau_0$ must satisfy all constraints in $U_0$. 
    But if $\btau_0 \models U_0$, it is possible that $\btau_0 \not\models g_0$. 
    For example $\btau_0 = \ranking{b_1, a, c, b_2}$ w.r.t. $\lambda_0 = \set{a \mapsto \set{l_a}, b_1 \mapsto \set{l_b}, b_2 \mapsto \set{l_b}, c \mapsto \set{l_c}}$. 
    In this case, $\btau_0 \models U_0$ but $\btau_0 \not\models g_0$.
\end{example}

For a pattern union $G = g_1 \cup \ldots \cup g_\numPatterns$, we can also calculate its upper bound in a similar way. Let $U_i$ denote the upper bound constraints for $g_i \in G$. For any ranking $\btau$, $\btau \models G$ \ifff $\exists g_i \in G, \btau \models g_i$. The $U_i$ is less strict than $g_i$, so $\btau \models U_i$ if $\btau \models g_i$. Let $\mathcal{U} = U_1 \cup \ldots \cup U_\numPatterns$ denote the union of upper bound constraints. Then $\btau \models \mathcal{U}$ iff $\exists U_i \in \mathcal{U}, \btau \models U_i$. So $\btau \models \mathcal{U}$ if $\btau \models G$. The $\mathcal{U}$ gives an upper bound for $G$.

Let $U_s$ denote a subset of $U$. Note that $U_s$ also gives an upper bound of $g$ that is less strict than the original $U$, but is faster to calculate. The same conclusion applies to a union of constraint subsets. This is the principle behind the evaluation of Most-Probable-Session queries in Section~\ref{sec:overview:count}.
\section{Approximate solvers}
\label{sec:approx}

Exact solvers compute answers to intractable problems. We will study their performance empirically in Section~\ref{sec:exp:exact}, and will observe that these solves are practical only for small queries, and for a modest number of candidates. To address scalability challenges that are inherent in the problem, we design approximate solvers that leverage the structure of the Mallows model, and specifically the recent results on efficient sampling from the Mallows posterior~\cite{DBLP:journals/jmlr/LuB14}. 

Let $\LMAL(\bsigma, \phi, \lambda)$ denote a labeled Mallows model with labeling function $\lambda$.
Let $G = g_1 \cup \ldots \cup g_\numPatterns$ be a union of $\numPatterns$ patterns.
We are interested in $\Pr(G \mid \bsigma, \phi, \lambda)$, the marginal probability of $G$ over $\LMAL(\bsigma, \phi, \lambda)$.
This is also the posterior probability of $G$ over $\LMAL(\bsigma, \phi, \lambda)$, or the expectation that a sample $\btau$ from $\mallows(\bsigma, \phi)$ satisfies $G$ w.r.t. $\lambda$.
\[
\Pr(G \mid \bsigma, \phi, \lambda) = \mathds{E}\Big(\mathds{1}\big((\btau,\lambda) \models G \big)\Big), \btau \sim \mallows(\bsigma, \phi)
\]
where $\mathds{1}(x)$ is the indicator function.

\subsection{Importance Sampling for Mallows}

Sampling is popular for probability estimation.
For example, we can use Rejection Sampling (RS) to sample a large number of rankings from $\mallows(\bsigma, \phi)$ and count how many of them satisfy $G$.
Generally, RS works well if the target probability is high, but is impractical for estimating low-probability events.
Importance Sampling (IS) can effectively estimate rare events~\cite{kahn1950random1, kahn1950random2}.
IS estimates the expected value of a function $f(x)$ in a probability space $\oriDistr$ via sampling from another {\em proposal distribution} $\propDistr$, then re-weights the samples for unbiased estimation.
Assume that $x$ is discrete, and that $N$ samples $\{ x_1, x_2, ..., x_N \}$ are generated from $\propDistr$.
The estimation is done as follows:

\begin{equation} \label{eq:IS}
\begin{split}
\mathds{E}_{\oriDistr}\big( f(x) \big) 
& = \sum_{x \in \oriDistr} {f(x) \cdot p(x)} 
= \sum_{x \in \propDistr} {f(x) \cdot \frac{p(x)}{q(x)} \cdot q(x)} \\
& = \mathds{E}_{\propDistr} \left( f(x) \cdot \frac{p(x)}{q(x)} \right)  
\approx \frac{1}{N} \sum_{i=1}^{N} {\frac{p(x_i)}{q(x_i)} f(x_i)}
\end{split}
\end{equation}
where $p(x)=\Pr(x \mid \oriDistr)$ and $q(x) = \Pr(x \mid \propDistr)$.

IS re-weights each sample $x_i$ by an \e{importance factor} $\frac{p(x_i)}{q(x_i)}$. 
When applying IS, $\propDistr$ is chosen to support efficient sampling and, ideally, to provide estimates $q(x)$ that are close to $p(x)$, also for efficiency reasons.
To calculate $\Pr(G | \bsigma, \phi, \lambda) = \mathds{E}(\mathds{1}((\btau,\lambda) \models G ))$, we set $f(x) = \mathds{1}((\btau,\lambda) \models G )$, where ranking $\btau$ is a sample.

\subsection{From Pattern Union to Sub-ranking Union}

Before diving into details of applying IS to RIM inference, let us examine the meaning of $(\btau,\lambda) \models G$.
Previously, we had $(\btau,\lambda) \models G$ \ifff $\exists g \in G, (\btau,\lambda) \models g$.
Recall from Section~\ref{sec:preliminaries:pref} that $(\btau,\lambda) \models g$ if there exists an embedding function $\embedding$ in which labels match ($\forall l\in nodes(g), l \in \lambda(\btau(\embedding(l)))$) and edges match ($\forall (l, l') \in edges(g), \embedding(l)<\embedding(l')$).

The embedding $\embedding$ constructs a partial order $\partialOrder = \{ \embedding(l) \succ \embedding(l') | (l, l') \in edges(g) \}$ so that $\btau \in \Omega(\partialOrder)$.
(Recall that $\Omega(\partialOrder)$ is the set of linear extensions of $\partialOrder$.)
Conceptually, a pattern $g$ can be decomposed into a union of partial orders with different embedding functions.
Let $\extension(g, \lambda)$ denote the union of partial orders decomposed from $g$ w.r.t. $\lambda$.
Then $(\btau, \lambda) \models g$ \ifff $\exists \partialOrder \in \extension(g, \lambda), \btau \in \Omega(\partialOrder)$.
We can calculate these partial orders for all patterns in $G$, any permutation $\btau$ satisfying any partial order will immediately satisfy a pattern in $G$, and so will satisfy $G$ itself. In this sense, $G$ is equivalent to a union of partial orders.

A partial order $\partialOrder$ can further be decomposed into a union of sub-rankings that are consistent with $\partialOrder$.
For example, $\partialOrder = \{a \succ c, b \succ c\}$ has two sub-rankings $\subRanking_1 = \angs{a, b, c}$ and $\subRanking_2 = \angs{b, a, c}$.
Let $\extension(\partialOrder)$ denote the union of sub-rankings from partial order $\partialOrder$. Let $\btau \models \subRanking$ denote that a permutation $\btau$ is consistent with a sub-ranking $\subRanking$.
Then $\btau \in \Omega(\partialOrder)$ \ifff $\exists \subRanking \in \extension(\partialOrder), \btau \models \subRanking$.
Because $G$ is equivalent to a union of partial orders (w.r.t. $\lambda$), we have:
$G = \bigcup \{ \subRanking | \subRanking \in \extension(\partialOrder), \partialOrder \in \extension(g, \lambda), g \in G \}$.

Figure~\ref{fig:decompose_patterns_to_subRankings} is an example, where a union of two patterns is decomposed into three partial orders, then further into six sub-rankings.
A ranking satisfies the pattern union \ifff it satisfies the sub-ranking union.
Assuming $\numPatterns$ patterns are decomposed into $\numSubRankings$ sub-rankings, we have $G=g_1 \cup \ldots \cup g_\numPatterns = \subRanking_1 \cup \ldots \cup \subRanking_\numSubRankings$.

\begin{figure}[htb]
	\centering
	\includegraphics[width=0.9\linewidth]{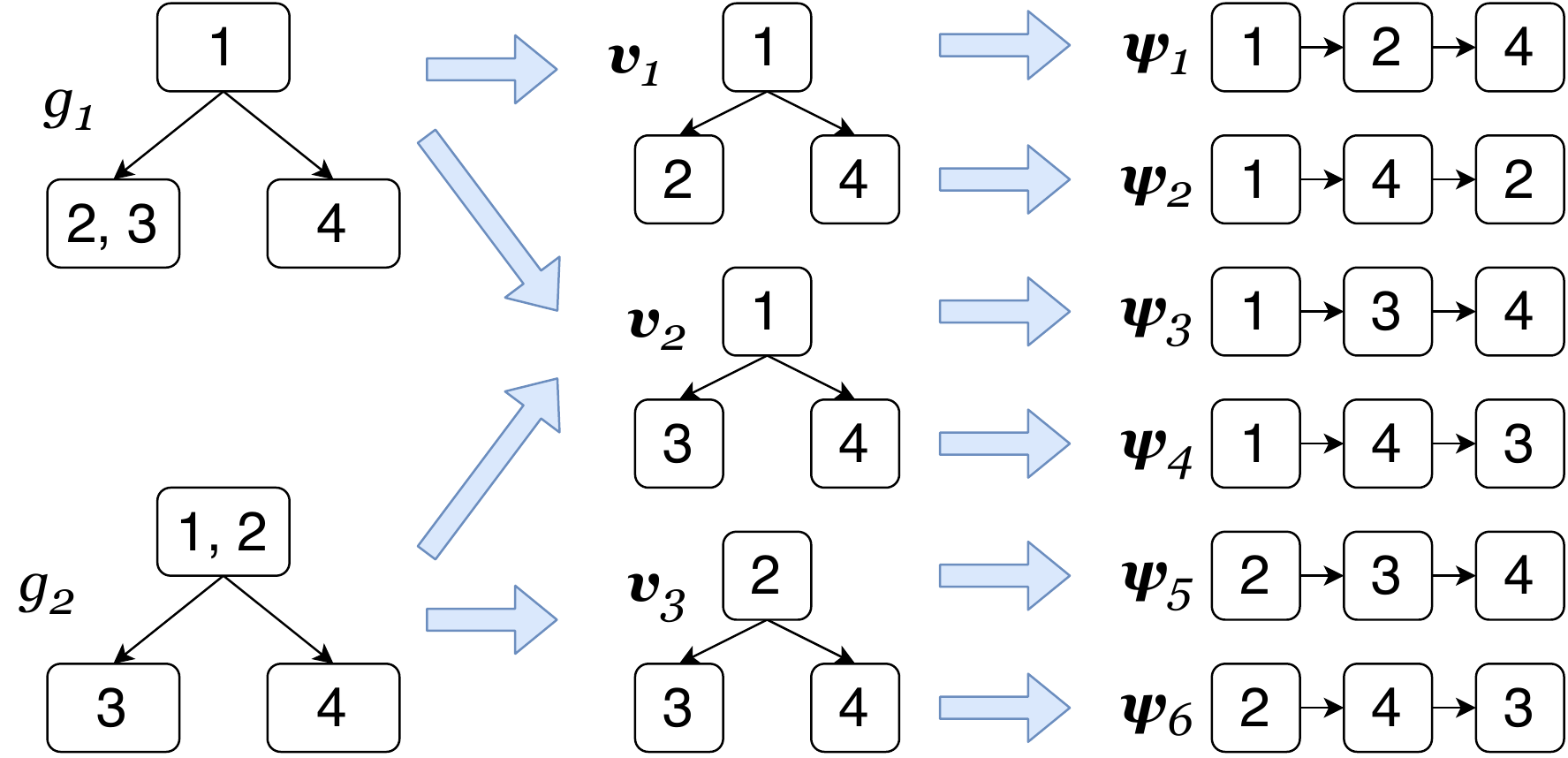}
	\caption{A union of two patterns (left), decomposed into a union of three partial orders (middle), then a union of six sub-rankings (right).}
	\label{fig:decompose_patterns_to_subRankings}
\end{figure}

\subsection{IS-AMP for a Single Sub-ranking}
\label{sec:approx:is}

The pattern union $G$ has been decomposed into $\numSubRankings$ sub-rankings. 
Before dealing with all sub-rankings, let us see how to estimate the expectation of a single sub-ranking $\subRanking$ over the Mallows model $\mallows(\bsigma, \phi)$.

If $\Pr(\subRanking \mid \btau, \phi)$ is low, RS is inefficient to reach accurate estimation. 
We can apply IS instead, using a proposal distribution that easily generates permutations satisfying $\subRanking$. 

Our first method, called IS-AMP, uses AMP, a state-of-the-art Mallows sampler conditioned on a partial order of items~\cite{DBLP:journals/jmlr/LuB14}, to construct a proposal distribution.
IS-AMP works well when the proposal distribution is around the ``important region'' of the probability space.
Unfortunately, as we  show next, AMP does not always give desirable proposal distributions, especially when there are  multiple \e{modals} --- peaks or local maxima --- in the posterior distribution. 

\begin{example}
\label{example: isamp_fails}

Let  $\subRanking_0=\ranking{\sigma_3, \sigma_1}$ be a sub-ranking for which we wish to calculate the expectation over $\mallows(\bsigma_0, \phi_0)$, with $\bsigma_0=\ranking{\sigma_1, \sigma_2, \sigma_3}$ and $\phi_0=0.01$.
Recall that with $\phi_0=0.01$, much of the probability mass of $\mallows(\bsigma_0, \phi_0)$ is around $\bsigma_0$.
In this case IS-AMP will sample $\btau_0 = \ranking{\sigma_3, \sigma_1, \sigma_2}$ very frequently, as follows:
(i) Insert $\sigma_1$ into an empty ranking $\ranking{}$.
(ii) Insert $\sigma_2$ into $\ranking{\sigma_1}$ after $\sigma_1$ with probability $\frac{1}{1+0.01}$.
(iii) Insert $\sigma_3$ into $\ranking{\sigma_1, \sigma_2}$ before $\sigma_1$ with probability 1.
If all samples are $\btau_0 = \ranking{\sigma_3, \sigma_1, \sigma_2}$, we will estimate:
\begin{equation*}
\begin{split}
\text{IS-AMP}(\subRanking_0 \mid \mallows(\bsigma_0, \phi_0)) 
& \approx \frac{\Pr(\btau_0 \mid \mallows(\bsigma_0, \phi_0))}{\Pr(\btau_0 \mid \AMP(\bsigma_0, \phi_0, \subRanking_0))} \\
& \approx \frac{9.9 \times 10^{-5}}{0.99} = 10^{-4}
\end{split}
\end{equation*}

However, there are two \e{modals} in the posterior distribution, $\btau_0 = \ranking{\sigma_3, \sigma_1, \sigma_2}$ and $\btau_1 = \ranking{\sigma_2, \sigma_3, \sigma_1}$.
These modals are rankings that are closest to $\bsigma_0$ (in terms of Kendall-tau distance) among those that are consistent with $\btau_0$, and so much of the probability mass of the posterior distribution is concentrated around them, {\em not} around $\bsigma_0$.
We have:
\begin{equation*}
\begin{split}
\Pr(\subRanking_0 \mid \bsigma_0,\phi_0) 
& \geq \Pr(\btau_0 \mid \bsigma_0, \phi_0)  + \Pr(\btau_1 \mid \bsigma_0, \phi_0) \\
& \approx 10^{-4} + 10^{-4}  > \text{IS-AMP}(\subRanking_0 \mid \bsigma_0, \phi_0)
\end{split}
\end{equation*}
\end{example}

In the example above, IS-AMP fails to effectively estimate the probability, because the posterior distribution is multi-modal.  To address this issue, we design MIS-AMP, a new sampler based on AMP geared specifically at multi-modal distributions.   We describe MIS-AMP next.  

\subsection{MIS-AMP for a Single Sub-ranking}
\label{sec:approx:mis}

We first give some general background on Multiple Importance Sampling (MIS), and will then show how it is applied to our scenario.
Assume there are $\numPropDistrs$ proposal distributions with probability mass functions $\{ q_1, ..., q_\numPropDistrs \}$ and $n_i$ samples generated from $q_i$.
Let $x_{i, j}$ be the $j$-th sample generated from $q_i$.
For each $x_{i, j}$, MIS not only calculates its importance factor as does IS, but it also calculates a weight $w_i$ with which $x_{i,j}$ is sampled from $q_i$.
Let $N = \sum_{i=1}^\numPropDistrs {n_i}$ and $c_i = n_i / N$.
Let $f(x)$ be the function of which we want to compute the expectation, and $p(x)$ be the probability mass function of the original distribution.
The MIS estimator is:
\begin{equation} \label{eq:MIS}
\mathds{E}(f(x)) = \sum_{i=1}^\numPropDistrs {\frac{1}{n_i} \sum_{j=1}^{n_i} {w_i(x_{i,j})\frac{p(x_{i,j})}{q_i(x_{i,j})}f(x_{i,j})}}
\end{equation}
This estimator is unbiased if $\forall x, \sum_i {w_i(x)} = 1$.
Vech and Guibas~\cite{DBLP:conf/siggraph/VeachG95} showed that the weighting function $w_i(x) = \frac{c_i q_i(x)}{\sum_{t=1}^\numPropDistrs c_t q_t(x)}$, designed to balance the contribution of each proposal distribution to the estimate, is a good choice.

When generating an equal number of samples from all proposal distributions (i.e., $n_1=...=n_\numPropDistrs=n$ and $c_1=...=c_\numPropDistrs=1/\numPropDistrs$), the Equation~\eqref{eq:MIS} can be simplified as:
\begin{equation} \label{eq:MIS_simplified}
\mathds{E}(f(x)) = \frac{1}{\numPropDistrs \cdot n} \sum_{i=1}^\numPropDistrs {\sum_{j=1}^{n} {\frac{p(x_{i,j})}{\frac{1}{\numPropDistrs}\sum_{t=1}^\numPropDistrs q_t(x_{i,j})}f(x_{i,j})}}
\end{equation}

\paragraph*{Importance Sampling for Mallows}
A good proposal distribution for IS should produce more samples in the ``important region'' of the target distribution---the region 
wherein there is a significant probability mass. 
So, instead of sampling with the original Mallows, we sample permutations that are consistent with the sub-ranking $\subRanking$.
Among all such, the ones that are nearest to Mallows center $\bsigma$ are the modals of the posterior.
The samples around these modals are the important regions, and they should be effectively captured by the proposal distributions.

 Our strategy is to construct Mallows models centered at these modals, and run AMP over them conditioned on the sub-ranking $\subRanking$.
 Unfortunately, it is intractable to find a completion of a partial order that is closest, in terms of Kendall-tau distance, to a given ranking (Theorem 2 in~\cite{DBLP:conf/walcom/BrandenburgGH12}).  This makes finding the  modals consistent with $\subRanking$ that are closest to $\bsigma$ intractable.
 Algorithm~\ref{alg:find_modals} uses a greedy heuristic to search for modals, by inserting items into $\subRanking$ at positions that minimize the distance to $\bsigma$.
 Note that $\subRanking_\iToj$ is a sub-ranking, with $\sigma_i$ inserted into $\subRanking$ at position $j$.

Let $S=\set{\lst[\numPropDistrs]{\bsigma}}$ denote the set of modals output by Algorithm~\ref{alg:find_modals}.
We construct $\mallows(\bsigma_1, \phi), \ldots, \mallows(\bsigma_\numPropDistrs, \phi)$, and run AMP over each, conditioned on the sub-ranking $\subRanking$, raising $\numPropDistrs$ proposal distributions.
We are interested in the expectation of $\mathds{1}(\btau \models \subRanking)$, where $\btau \sim \mallows(\bsigma, \phi)$. Note that the permutations generated by MIS-AMP will always satisfy $\subRanking$, \ie $\mathds{1}(\btau \models \subRanking) \equiv 1$.
Using Equation~\eqref{eq:MIS_simplified}, we estimate:
\begin{equation} \label{eq:MIS_AMP}
\mathds{E} \big( \mathds{1}(\btau \models \subRanking) \big) = \frac{1}{\numPropDistrs \cdot n} \sum_{i=1}^\numPropDistrs {\sum_{j=1}^{n} {\frac{p(x_{i,j})}{\frac{1}{\numPropDistrs}\sum_{t=1}^\numPropDistrs q_t(x_{i,j})}}}
\end{equation}

\begin{example}
We now revisit Example~\ref{example: isamp_fails} and solve it by MIS-AMP.
Recall that we wish to calculate the expectation of $\subRanking_0=\ranking{\sigma_3, \sigma_1}$ over $\mallows(\bsigma_0, \phi_0)$ with $\bsigma_0=\ranking{\sigma_1, \sigma_2, \sigma_3}$ and $\phi_0=0.01$.  Algorithm~\ref{alg:find_modals} will find two modals, $\bsigma_1 = \ranking{\sigma_3, \sigma_1, \sigma_2}$ and $\bsigma_2 = \ranking{\sigma_2, \sigma_3, \sigma_1}$ as centers of the newly constructed  $\mallows(\bsigma_1, \phi)$ and $\mallows(\bsigma_2, \phi)$.
MIS-AMP then draws rankings from two AMP samplers, $\AMP(\bsigma_1, \phi_0, \subRanking_0)$ and $\AMP(\bsigma_2, \phi_0, \subRanking_0)$. Then $\btau_0 = \ranking{\sigma_3, \sigma_1, \sigma_2}$ is re-weighted as follows.
\begin{equation*}
\begin{split}
& \text{MIS{-}AMP}(\btau_0 \mid \bsigma_0, \phi_0, \subRanking_0) \\
\approx & \frac{\Pr(\btau_0 \mid \mallows(\bsigma_0, \phi_0))}{\frac{1}{2}\Big(\Pr\big(\btau_0 | \AMP(\bsigma_1, \phi_0, \subRanking_0)\big){+}\Pr\big(\btau_0 | \AMP(\bsigma_2, \phi_0, \subRanking_0)\big)\Big)} \\
\approx & \frac{9.9 \times 10^{-5}}{\frac{1}{2}(0.99 + 0.01)} \approx 2 \times 10^{-4}
\end{split}
\end{equation*}

That is, in terms of re-weighting $\btau_0$, MIS-AMP significantly outperforms IS-AMP in Example~\ref{example: isamp_fails}.
\end{example}

Having discussed how MIS-AMP can be used to estimate the posterior probability for a single sub-ranking, we now return to the more general problem we study in this paper, and show how MIS can be used to estimate the probability of a union of sub-rankings and a union of patterns.

\begin{algorithm}[t!]
	\caption{GreedyModals}
	\begin{algorithmic}[1]
		\REQUIRE Sub-ranking $\subRanking$, Mallows model $\mallows(\bsigma, \phi)$
		\STATE $S \defeq \{ \subRanking \}$
		\FOR {$i=1,2,...,m$}
			\IF {$\sigma_i \notin \subRanking$}
				\STATE $S' \defeq \emptyset$
				\FOR {$\subRanking \in S$}
					\STATE $J = \{j \mid \dist(\subRanking_\iToj, \bsigma) =   \underset{j'=1,\dots |\subRanking|}{\min} \ {\hskip-1em\dist(\subRanking_{i \rightarrow j'}, \bsigma)}\}$ 
					\FOR {$j \in J$}
						\STATE $S'.add(\subRanking_\iToj)$.
					\ENDFOR					
				\ENDFOR
				\STATE $S \defeq S'$
			\ENDIF
		\ENDFOR
		\RETURN  $S$
	\end{algorithmic}
	\label{alg:find_modals}
\end{algorithm}

 \begin{algorithm}[t!]
 	\caption{ApproximateDistance}
 	\begin{algorithmic}[1]
 		\REQUIRE Sub-ranking $\subRanking$, Mallows center $\bsigma$
 		\STATE $\btau \defeq \subRanking$
 		\FOR {$i=1,2,...,m$}
 		\IF {$\sigma_i \notin \subRanking$}
 		\STATE $J = \{j \mid \dist(\subRanking_\iToj, \bsigma) =   \underset{j'=1,\dots |\subRanking|}{\min} \ {\hskip-0em\dist(\subRanking_{i \rightarrow j'}, \bsigma)}\}$
 		\STATE $\btau \defeq \btau_\iToj, j \in J$
 		\ENDIF
 		\ENDFOR
 		\RETURN  Kendall-tau$(\btau, \bsigma)$
 	\end{algorithmic}
 	\label{alg:approx_dist_subranking}
 \end{algorithm}

\subsection{MIS-AMP-Lite and MIS-AMP-Adaptive}
\label{sec:approx:lite}

MIS-AMP can in principle be used for a union of sub-rankings and a union of patterns.  However, not unexpectedly, the challenge is that a pattern union $G$ corresoponds to exponentially many sub-rankings, each of which in turn  yields multiple modals for MIS (per Section~\ref{sec:approx:mis}), and so generating all sub-rankings and then using MIS-AMP for each is intractable.  Instead, we develop a method for selecting a subset of subrankings of fixed size $\numPropDistrs$, and ensuring that the corresponding proposal distributions cover the important regions of the posterior.  We call this method MIS-AMP-lite.

Suppose that $G$ has $\numPatterns$ patterns, and that it is equivalent to a union of $\numSubRankings$ sub-rankings.  MIS-AMP-lite 
sorts $\numSubRankings$ sub-rankings in ascending order of their {\em estimated distance} from the Mallows center $\bsigma$, as computed by Algorithm~\ref{alg:approx_dist_subranking}. Since the sub-rankings containing modals close to $\bsigma$ are desirable, we define the distance between a sub-ranking $\subRanking$ and $\bsigma$ as the minimum Kendall-tau distance between $\bsigma$ and a modal contained in $\subRanking$. But identifying the closest modals is intractable, thus we estimate this distance using a greedy modal $r$ generated in Algorithm~\ref{alg:approx_dist_subranking}. Let $\dist(\subRanking, \bsigma)$ denote the estimated Kendall-tau distance between $\subRanking$ and $\bsigma$. Each sub-ranking $\subRanking$ represents a component of size proportional to $\phi^{\dist(\subRanking, \bsigma)}$ in the posterior distribution. 

Since MIS-AMP-lite prunes many components in the posterior distribution, the algorithm should compensate for this pruning in the final result. Let $S$ denote the sub-rankings in $G$, and $S^+ \subseteq S$ denote the set of selected sub-rankings. The compensation factor $\compensation_{\subRanking}$ for sub-ranking pruning is:
\[
\compensation_{\subRanking} = \frac{\sum_{\subRanking \in S} \phi^{\dist(\subRanking,\bsigma)}}{\sum_{\subRanking \in S^+} \phi^{\dist(\subRanking,\bsigma)}}
\]
Intuitively, the compensation factor $\compensation_{\subRanking}$ captures the portion of the probability space represented by the selected sub-rankings.
MIS-AMP-lite also prunes modals, selecting $\numPropDistrs$ modals closest to $\bsigma$. Let $M$ denote the set of available modals, and $M^+ \subseteq M$ denote the set of selected modals.
The compensation factor $\compensation_{r}$ for modal pruning is defined similarly as for sub-rankings:
\[
\compensation_{r} = \frac{\sum_{r \in M} \phi^{\dist(r,\bsigma)}}{\sum_{r \in M^+} \phi^{\dist(r,\bsigma)}}
\]

Let $p$ denote the estimate by MIS-AMP-lite over $\numPropDistrs$ proposal distributions without compensation. The final estimate is
$\Pr(G|\bsigma, \phi) = p \cdot \compensation_{\subRanking} \cdot \compensation_{r}$.
We experimentally validate the compensation mechanism in Section~\ref{sec:exp:approx}, and show that it leads to higher accuracy.

MIS-AMP-lite requires $\numPropDistrs$, the number of proposal distributions, as an input parameter. As an alternative, MIS-AMP-adaptive calls MIS-AMP-lite as a subroutine, and gradually increases the number of proposal distributions in increments of 
 $\Delta \numPropDistrs$ until convergence.  We will demonstrate the effectiveness of MIS-AMP-adaptive in Section~\ref{sec:exp:approx}.

\newpage 
\section{Experimental evaluation}
\label{sec:experiments}

We now present results of an extensive experimental evaluation of exact and approximate solvers over six families of experimental datasets.  All solvers are implemented in Python.  The general solver uses LTM~\cite{DBLP:conf/sigmod/CohenKPKS18}, implemented in Java, as a subroutine. We ran experiments on a 64-bit Ubuntu Linux machine with 48 cores on 4 CPUs of Intel(R) Xeon(R) CPU E5-2680 v3 @ 2.50GHz, and 512GB of RAM.

\begin{figure*} [t!]
	\centering
	\begin{minipage}[t]{.32\textwidth}
		\centering
		\includegraphics[width=\linewidth]{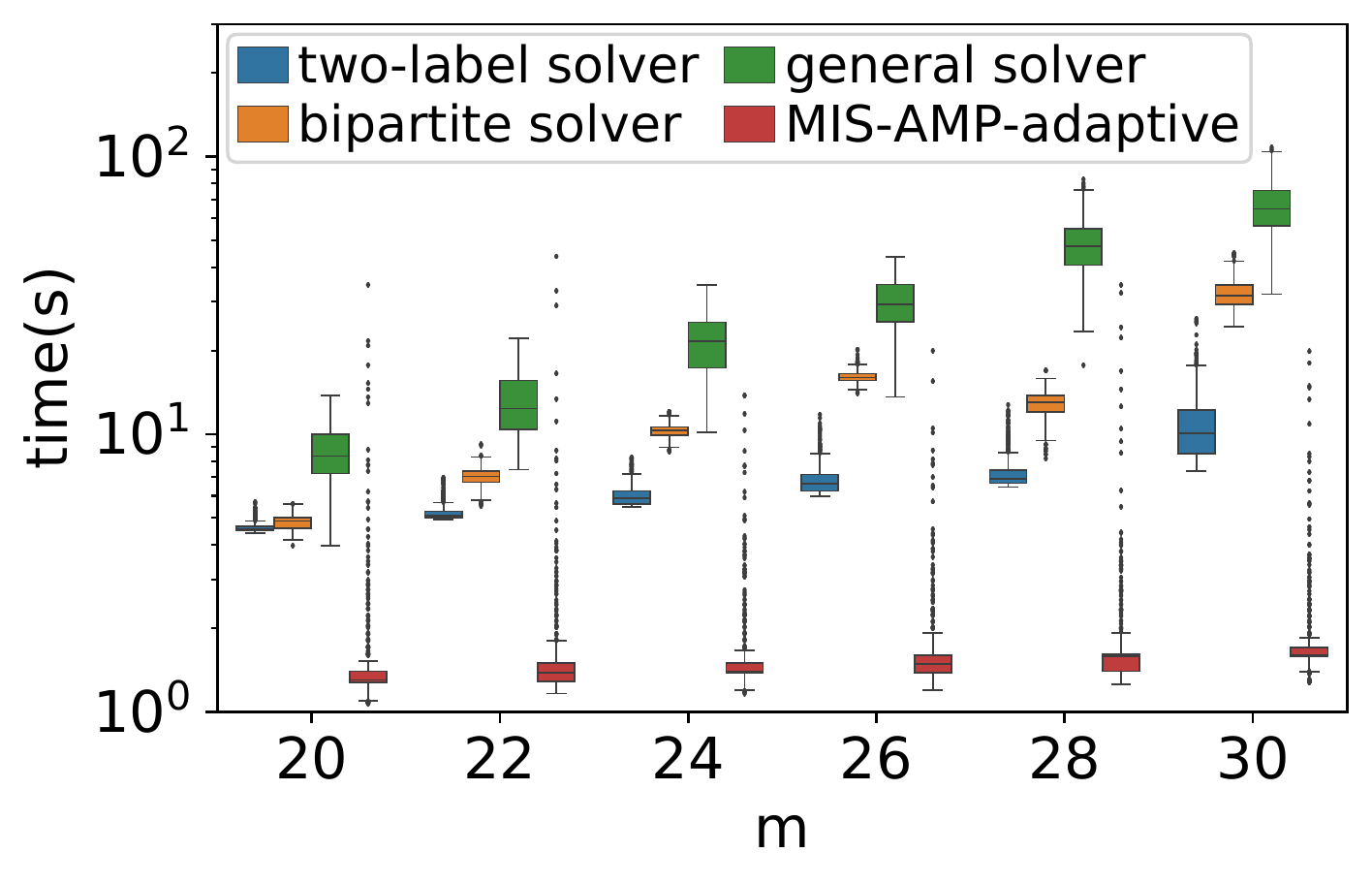}
		\caption{Evaluating a two-label query over \textbf{Polls} to compare performance of exact solvers and of  MIS-AMP-adaptive.}
		\label{fig:runtime__2label_vs_bipartite_vs_inexclu}
	\end{minipage}
	\hfill
	\begin{minipage}[t]{.32\textwidth}
		\centering
		\includegraphics[width=\linewidth]{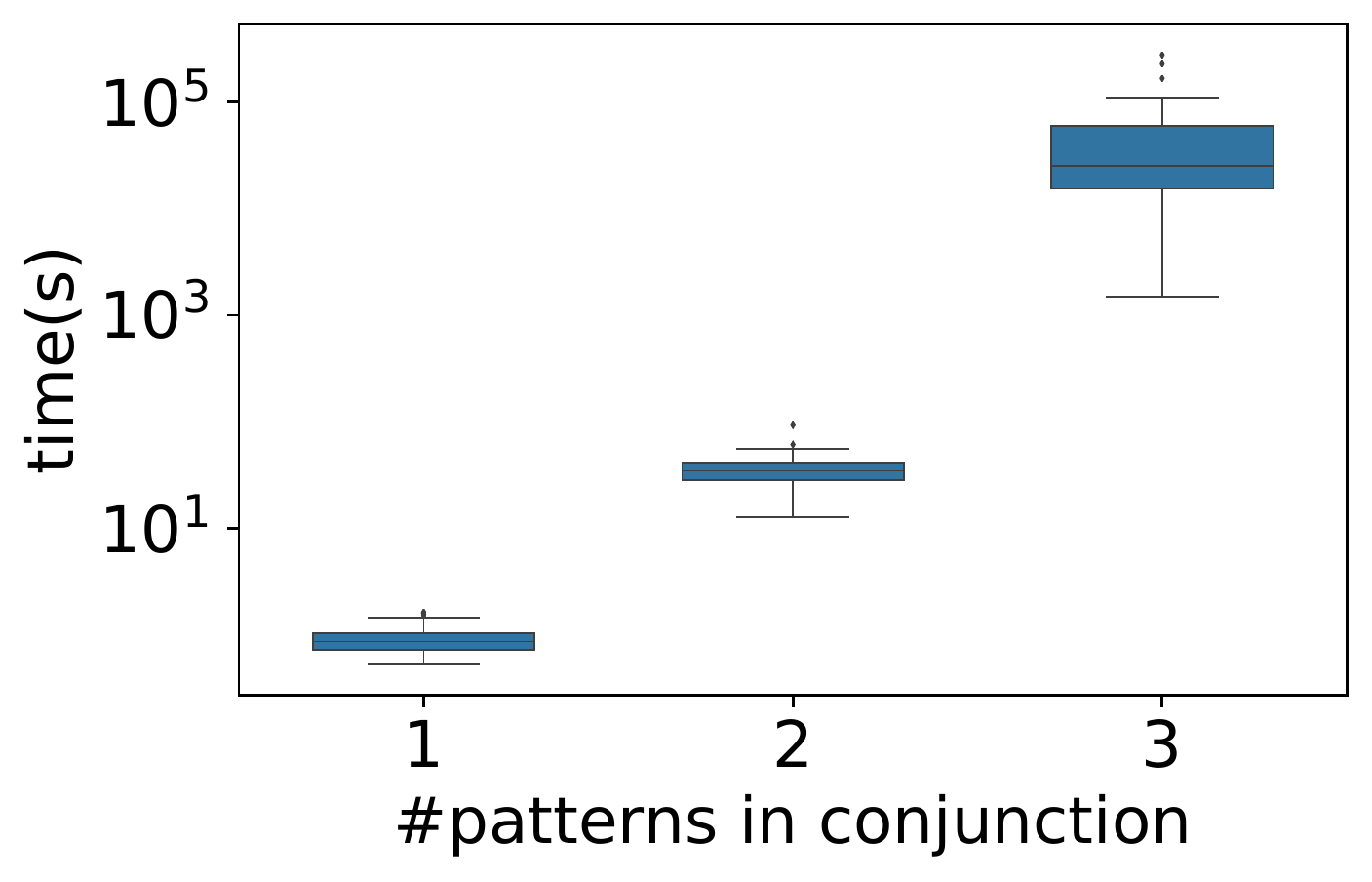}
		\caption{General solver running time increases exponentially with \# patterns in conjunction for \benchmarkA.}
		\label{fig:LTM_runtime_vs_numPatterns}
	\end{minipage}
	\hfill
	\begin{minipage}[t]{.32\textwidth}
		\centering
		\includegraphics[width=0.73\linewidth]{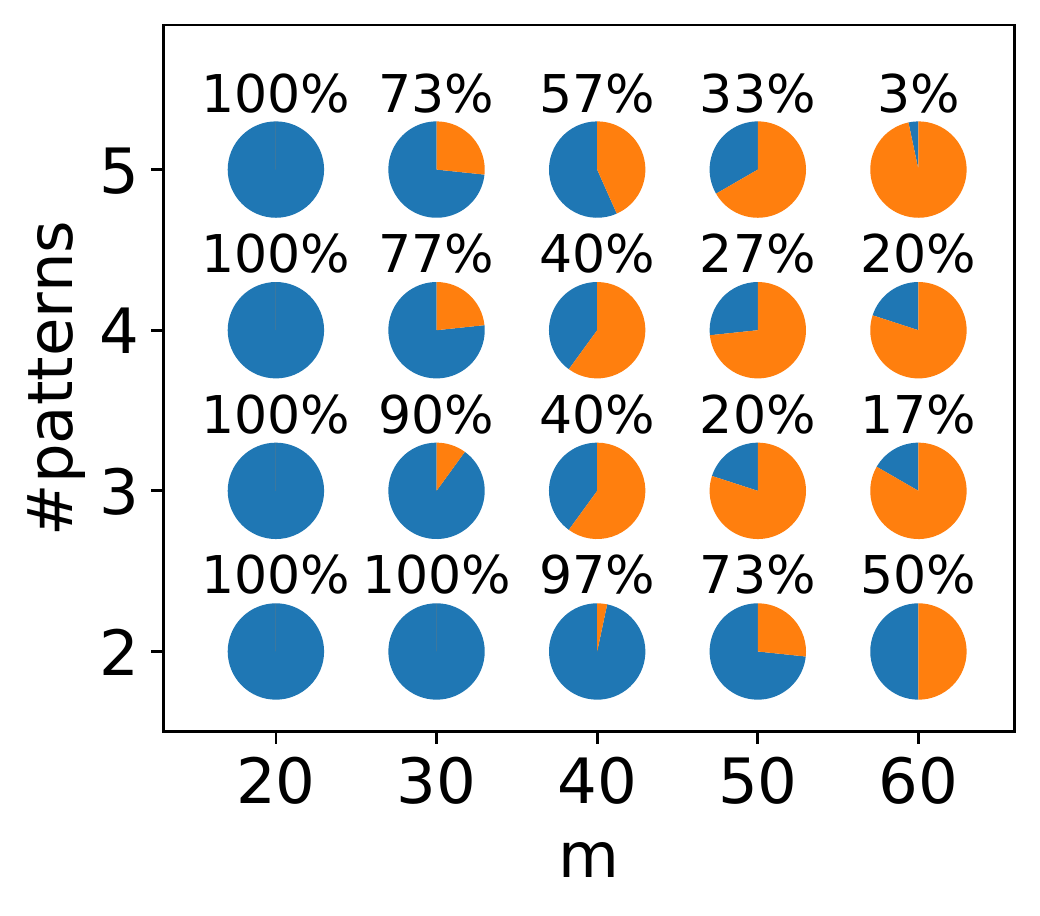}
		\caption{Proportions of instances that finished in 10 minutes by two-label sover over \benchmarkD.}
		\label{fig:scalability_2label}
	\end{minipage}
	\hfill
\end{figure*}

\subsection{Datasets}

In our experimental evaluation we use two real datasets --- \textbf{MovieLens} and \textbf{CrowdRank}, and four synthetic benchmarks --- \textbf{Polls}, and \textbf{Benchmarks A}, \textbf{B}, and \textbf{C}.

\benchmarkA \ has 33 pattern unions over the model $\mallows(\ranking{\lst{\sigma}}, 0.1)$.
Each union consists of 3 bipartite patterns $\{A \succ C, A \succ D, B \succ D \}$.
In every pattern union, the 3 patterns share the same items in label $B$ and $D$. The labels all have 3 items sampled from $\bsigma$. Label $A$ and $B$ get item $\sigma_i$ with probability $p_i \propto i^{1.5}$, while label $C$ and $D$ get item $\sigma_i$ with probability $p_i \propto (16-i)^{1.5}$. Note that items with labels $C$ and $D$ tend to have higher ranks than items with $A$ and $B$. As a result, some pattern unions have low probabilities, allowing us to test the accuracy of approximate solvers.

\benchmarkB \ is a set of pattern unions with varying number of patterns, labels per pattern, and items per label. Within a pattern union, all patterns share the same edges that correspond to random partial order of labels. The number of items $m$ is among $\{20, 50, 100, 200\}$, and Mallows $\phi=0.1$. The number of patterns per union is 1, 2, or 3. The number of labels per pattern is 3, 4, or 5. The number of items per label is 3, 5, or 7. Each combination of the parameters above has 10 instances in this benchmark, for a total of $4 \times 3 \times 3 \times 3 \times 10 = 1080$ instances. This benchmark tests the scalability of approximate solvers.

\benchmarkC \ is a set of bipartite pattern unions with varying number of patterns, labels per pattern, and items per label. The patterns within the same union share the same edges that are random bipartite directed graphs of labels. The number of items $m$ is among $\{10, 12, 14, 16\}$, and Mallows $\phi=0.1$. The number of patterns per union is 1, 2, or 3. The number of labels per pattern is among 2, 3, or 4. The number of items per label is 1, 3, or 5. Each combination of the parameters above has 10 instances, for a total of $1080$ instances. This benchmark has smaller patterns and fewer items in the Mallows models compared to \benchmarkB.

\benchmarkD \ is a set of 2-label pattern unions that are randomly generated. 
The number of items in the Mallows model, $m$, is among $\{20, 30, 40, 50, 60\}$, and $\phi=0.5$. 
The number patterns per union is among $\{2, 3, 4, 5\}$.
The items per label is among $\{3, 5, 7\}$. 
For each combination of the parameters above, 10 random instances are generated.
This benchmark tests the scalability of the two-label solver.

\textbf{Polls} is a synthetic database inspired by the 2016 US presidential election. The data is generated in the way of~\cite{DBLP:conf/sigmod/CohenKPKS18}, with database schema as in Figure~\ref{fig:elections}. The tuples in \textbf{Candidates} and \textbf{Voters}, and the values in each tuple are generated independently. Attributes party and sex have cardinality 2, geographic region cardinality  6, edu and age cardinality  6 (10-year brackets). For age, we assigned values between 20 and 70 in increments of 10, with each value represents a 10-year bracket. We generate 1000 voters falling into 72 demographic groups. For each group, we generate 3 random reference rankings and 3 $\phi$ values $\{ 0.2, 0.5, 0.8 \}$ to construct 9 distinct Mallows models. Each voter is randomly assigned a Mallows  from her group, and a random poll date from two dates, which instantiates the relation \textbf{Polls}. 

\textbf{MovieLens} is a dataset of movie ratings from GroupLens (\url{www.grouplens.org}). In line with previous works~\cite{DBLP:journals/jmlr/LuB14, DBLP:conf/sigmod/CohenKPKS18}, we use the 200 (out of around 3900) most frequently rated movies and ratings from 5980 users who rated at least one of these movies. We learned a mixture of 16 Mallows models using a publicly available tool~\cite{DBLP:conf/webdb/StoyanovichIP16}. We store movie information in a relation $M$(id, title, year, genre).

\textbf{CrowdRank} is a real dataset of movie rankings of 50 Human Intelligence Tasks (HITs) collected on Amazon Mechanical Turk~\cite{DBLP:conf/webdb/StoyanovichJG15}. Each HIT provides 20 movies for 100 workers to rank. Then a mixture of Mallows is mined for each HIT with a publicly-available tools~\cite{DBLP:conf/webdb/StoyanovichIP16}. We selected a HIT with seven Mallows models. CrowdRank also includes worker demographics. We used a publicly available tool~\cite{DBLP:conf/ssdbm/PingSH17} to generate 200,000 synthetic user profiles statistically similar to the original 100 workers, with the Mallows model among the attributes.

\begin{figure*} [t!]
	\centering
	\begin{minipage}[t]{.64\textwidth}
		\scalebox{1}{
			\centering
			\medskip
			\subfloat[3 patterns/union, 3 items/label]{
				\includegraphics[width=0.5\linewidth]{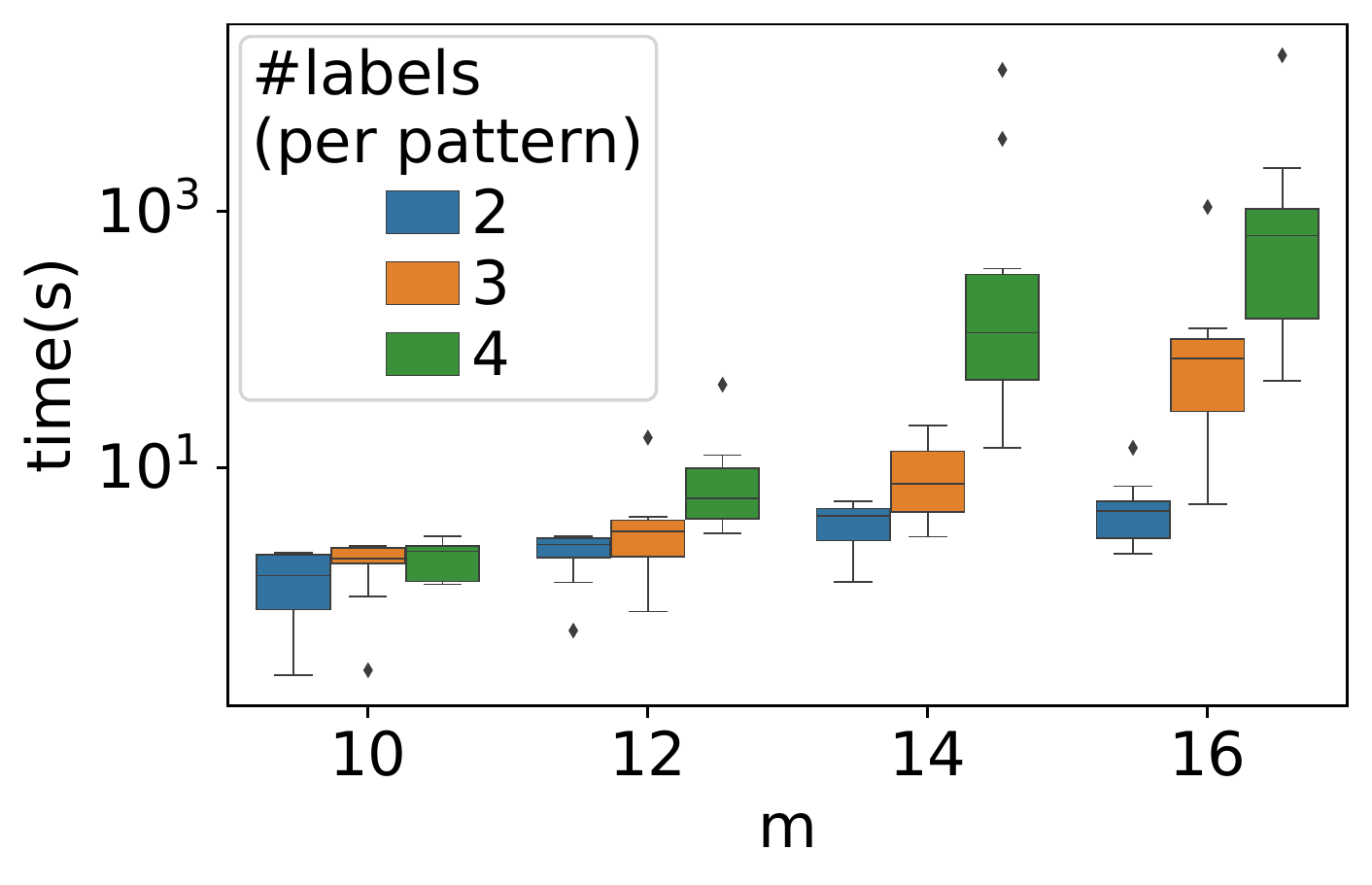}
				\label{fig:scalability_bipartite_1}
			}	\hskip-0.5em
			\subfloat[3 labels/pattern, 3 items/label]{
				\includegraphics[width=0.5\linewidth]{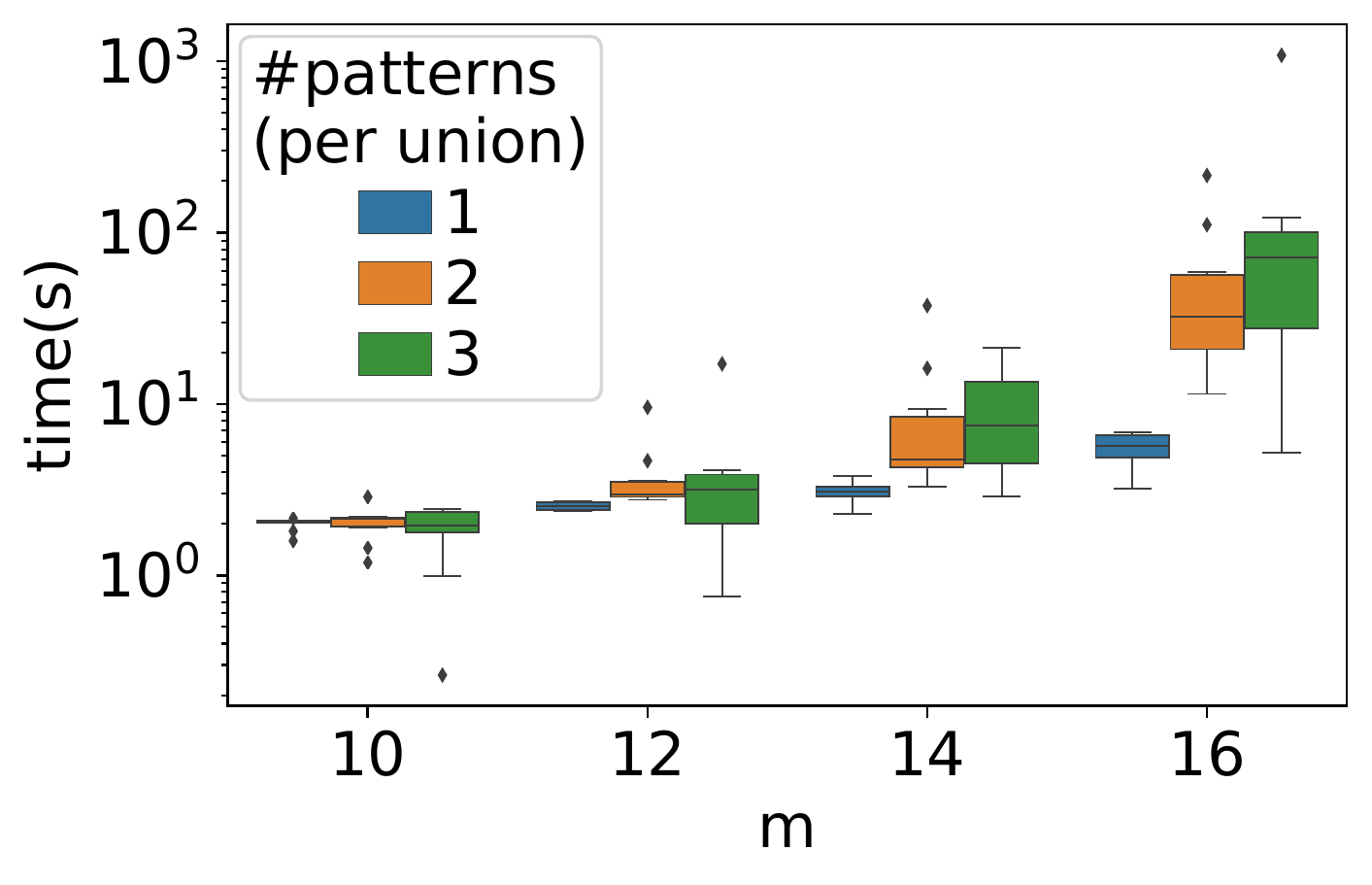}
				\label{fig:scalability_bipartite_2}
			}
		}
		\caption{Scalability of bipartite solver over \benchmarkC}
	\end{minipage}
	\hfill
	\begin{minipage}[t]{.32\textwidth}
		\centering
		\includegraphics[width=\linewidth]{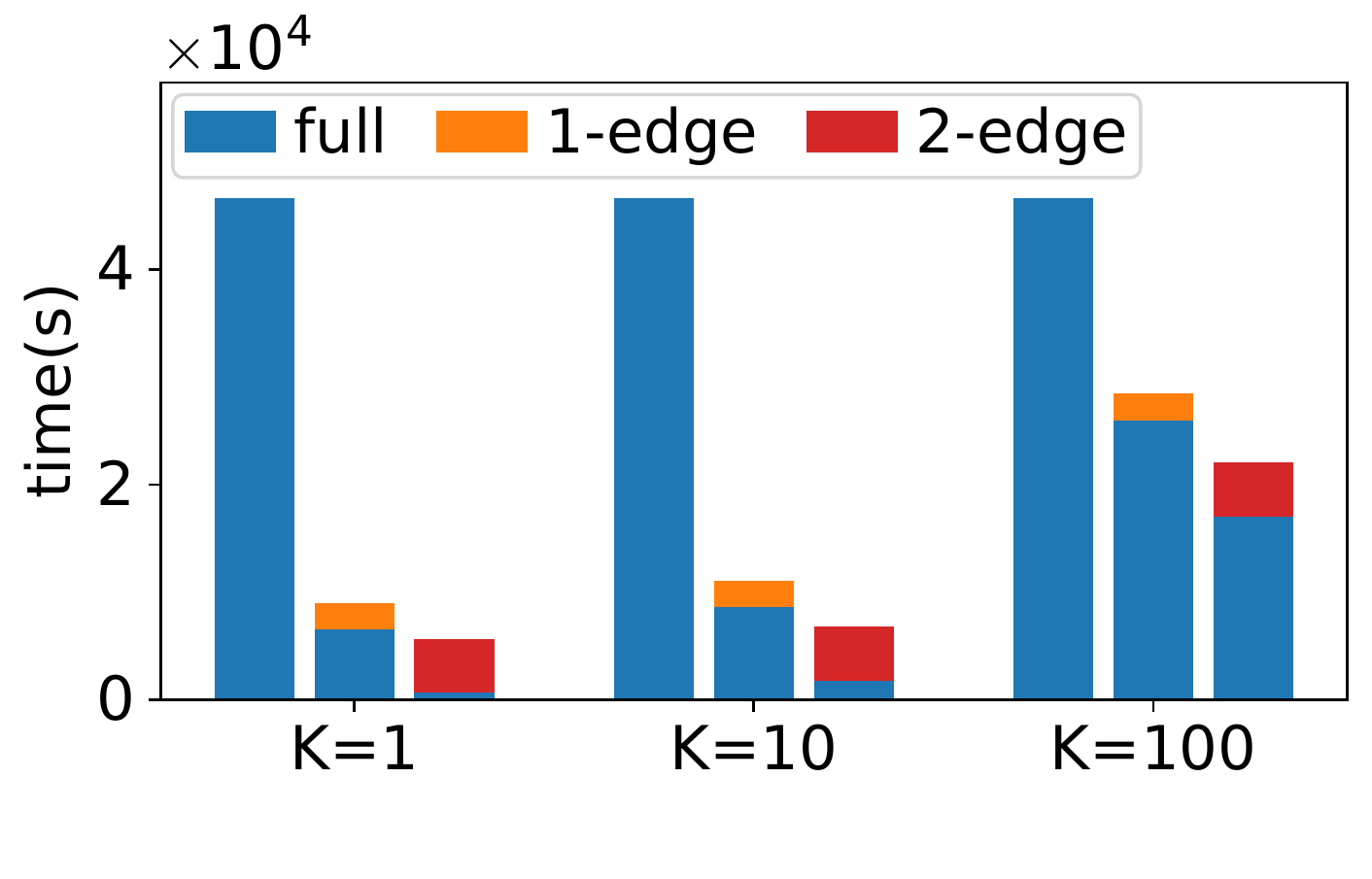}
		\caption{The top-$k$ optimization works well over \textbf{Polls}. The tallest ``full'' bars are baseline. The ``1-edge'' and ``2-edge'' bars first quickly compute upper bounds of all sessions.}
		\label{fig:polls__topK}
	\end{minipage}
	\hfill
\end{figure*}

\begin{figure*} [t!]
	\centering
	\begin{minipage}[t]{.32\textwidth}
		\centering
		\includegraphics[width=\linewidth]{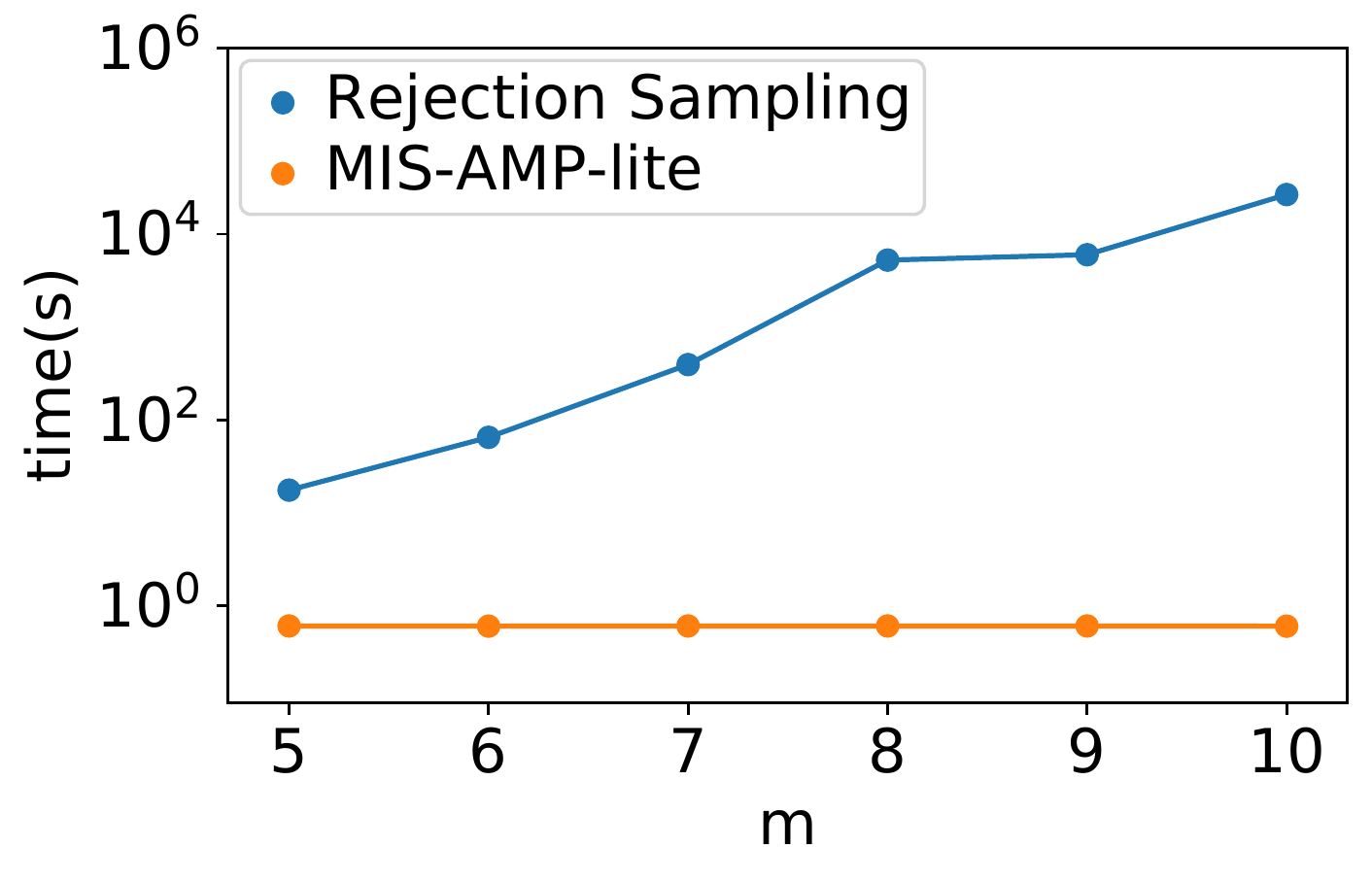}
		\caption{RS does not scale as well as MIS-AMP-lite for query $\sigma_m \succ \sigma_1$ over $\mallows(\langle \sigma_1, \ldots, \sigma_m \rangle, 0.1)$.}
		\label{fig:RS_slow}
	\end{minipage}
	\hfill
	\begin{minipage}[t]{.64\textwidth}
		\scalebox{1}{
			\centering
			\medskip
			\subfloat[\benchmarkA]{
				\includegraphics[width=0.5\linewidth]{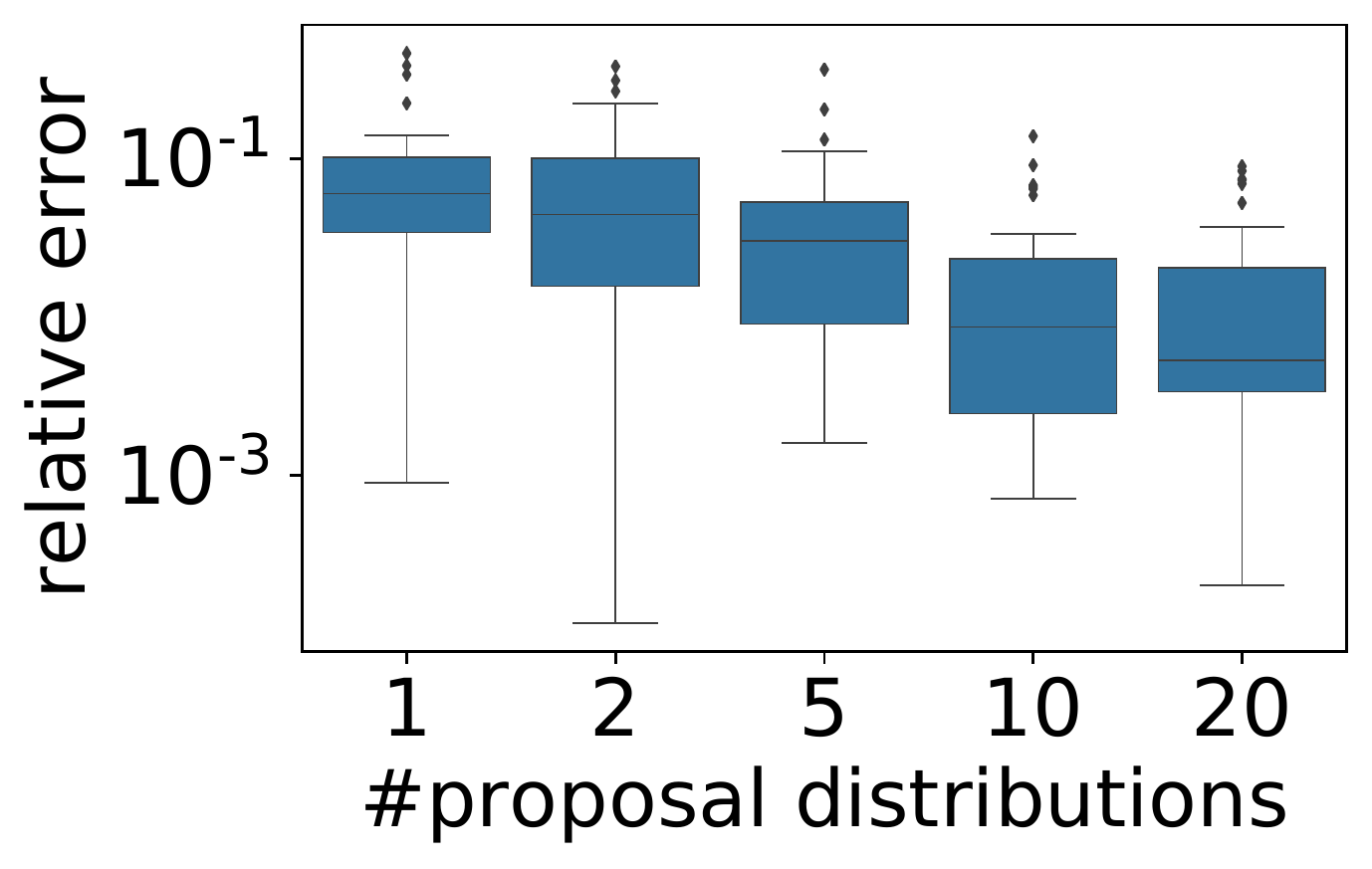}
				\label{fig:MIS-AMP_accuracy_vs_k}
			}
			\subfloat[\benchmarkC, 3 patterns/union, \hfill \break 3 labels/pattern, 3 items/label]{
				\includegraphics[width=0.5\linewidth]{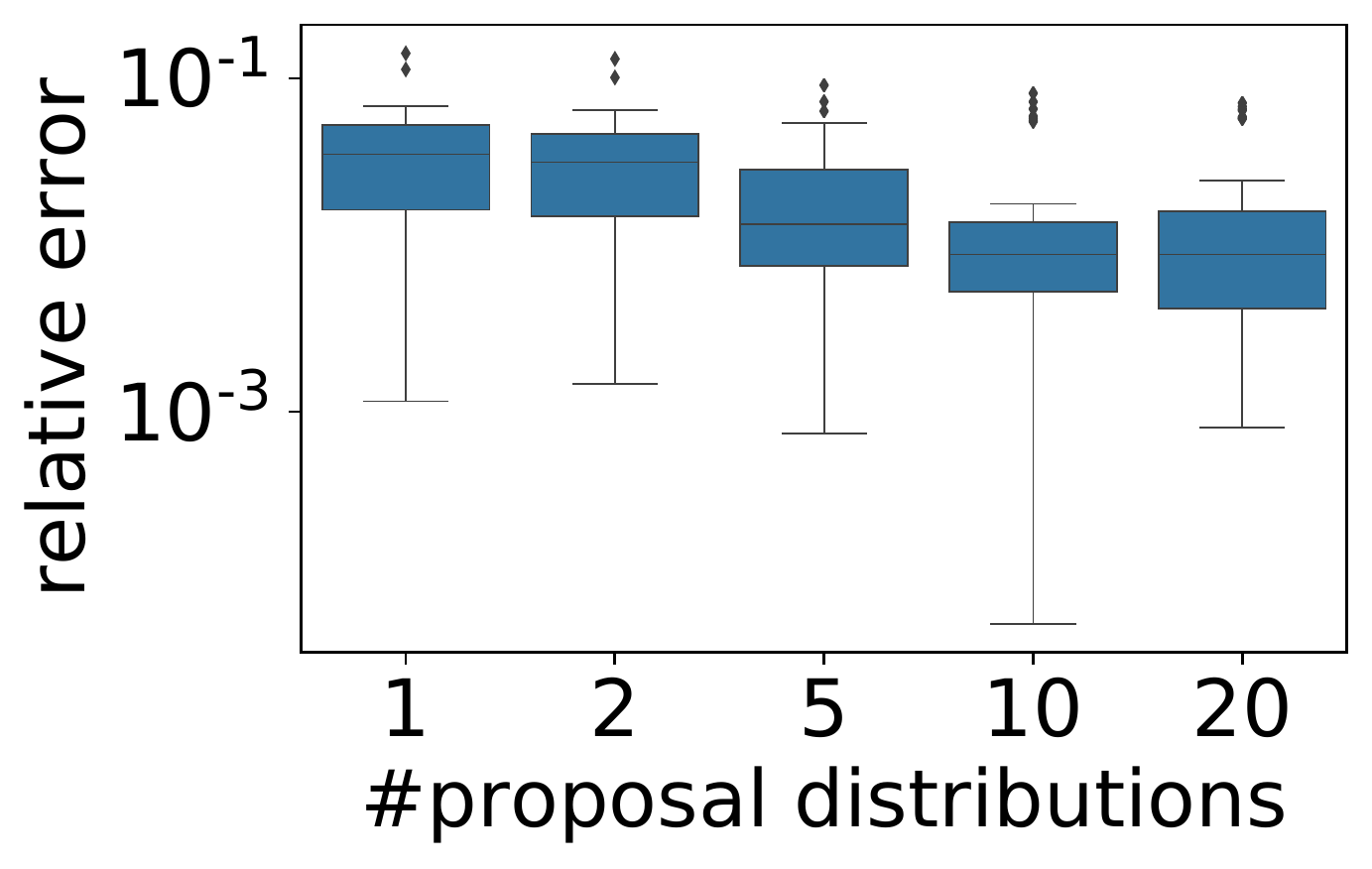}
				\label{fig:MIS-AMP_accuracy_vs_k__benchmarkC}
			}
		}
		\caption{Multi-proposal distributions improve accuracy.}
		\label{fig:all}
	\end{minipage}
	\hfill
\end{figure*}

\subsection{Performance of Exact Solvers}
\label{sec:exp:exact}

In our first experiment, we highlight the relative performance of three exact solvers (Section~\ref{sec:exact}) and the approximate solver MIS-AMP-adaptive (Section~\ref{sec:approx}) over \textbf{Polls} with  20 to 30 candidates, for a Boolean CQ that all solvers can handle:
\[
Q() \leftarrow P(\_,\_;l;r), C(l,p,\val{M},\_,\_,\_), C(r,p,\val{F},\_,\_,\_);
\]
$Q$ asks whether any session prefers a male candidate to a female candidate from the same party $p$.

Figure~\ref{fig:runtime__2label_vs_bipartite_vs_inexclu} compares the running times. Among all solvers, MIS-AMP-adaptive is the most scalable, although, as indicated by the presence of   outliers, the running time of this sampling-based method varies significantly due to randomness.
Among the exact solvers, the two-label solver is faster than the bipartite solver, which is in turn faster than the general solver.
Importantly, MIS-AMP-adaptive is both scalable and accurate for this particular query: 77\% of the instances have relative error under 1\%, and 93\%  have relative error 10\%.  The highest relative error is 63\%.

\medskip
\textbf{General solver over \benchmarkA.}
We evaluate the performance of the general solver over \benchmarkA, where each pattern union $G$ has 3 patterns: $G = g_1 \cup g_2 \cup g_3$. The  solver applies inclusion-exclusion principle to generate pattern conjunctions as follows:
\[
G=g_1 + g_2 + g_3 - (g_1 \wedge g_2) - (g_1 \wedge g_3) - (g_2 \wedge g_3) + (g_1 \wedge g_2 \wedge g_3)
\]
That is, $G$ is decomposed into 7 patterns, and LTM is called to compute the probability for each of them.
Figure~\ref{fig:LTM_runtime_vs_numPatterns} presents the running time of LTM as a function of the number of patterns in a conjunction, showing an exponential increase in running time.

\medskip
\textbf{Two-label solver scalability over \benchmarkD.}
Figure~\ref{fig:scalability_2label} shows the proportions of instances that finished by two-label solver within 10 minutes. 
The two-label solver is sensitive to both total number of items and the number of patterns in a union.
For large pattern unions and large RIM models, the inference algorithm is less likely to finish in time.

\medskip
\textbf{Bipartite solver scalability over \benchmarkC.}
The benchmark has pattern unions of various numbers of patterns per union and various numbers of labels per pattern. 
Recall that the complexity of bipartite solver is $O(m^{q \numPatterns})$ where $m$ is the number of items in RIM model, $q$ is the number of labels per pattern, and $\numPatterns$ is the number of patterns per union. 
The $q \numPatterns$ is the total number of labels in a pattern union, which is a key parameter for bipartite solver.

Figure~\ref{fig:scalability_bipartite_1} shows the running time of bipartite solver with regards to the number of items $m$ and number of labels per pattern, with number of patterns in the union and number of items per label both fixed at 3. The running time increases very fast with both parameters.
Similarly, Figure~\ref{fig:scalability_bipartite_2} shows the running time of bipartite solver with regards to the number of items in RIM model and number of labels per pattern, with number of patterns in union and number of items per label both fixed to be 3. The running time increases very fast with both parameters.  Nonetheless, bipartite solver is practical for lower values of $m$.

\begin{figure*} [t!]
	\centering
	\hfill
	\begin{minipage}[t]{.65\textwidth}
		\centering
		\scalebox{1}{
			\subfloat[typical]{
				\includegraphics[width=0.34\linewidth]{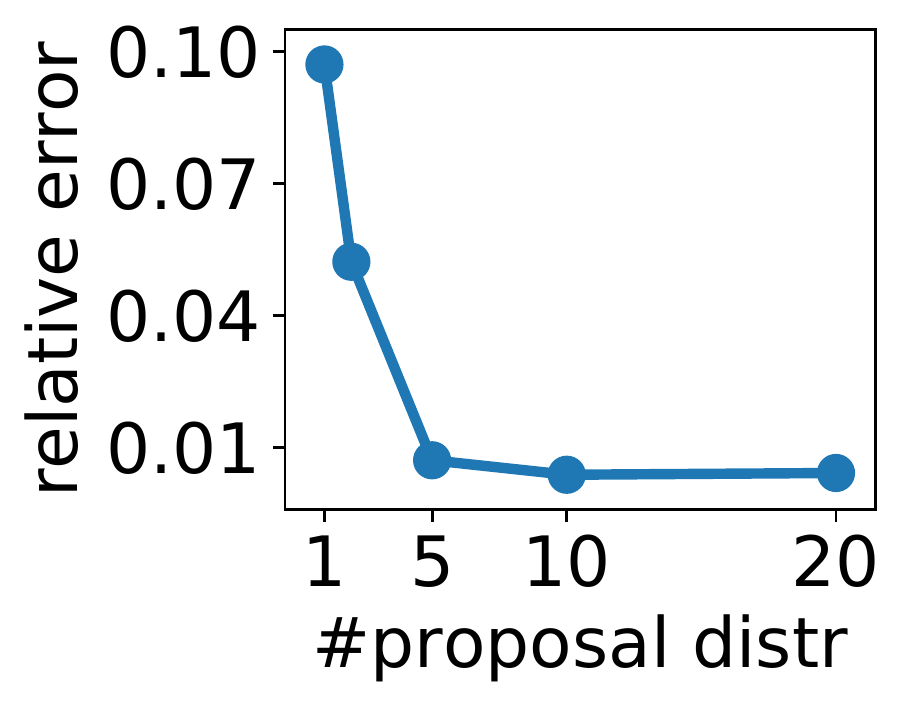}
				\label{fig:MIS-AMP_normal_instance}
			}
			\hspace{-1em}
			\subfloat[atypical]{
				\includegraphics[width=0.315\linewidth]{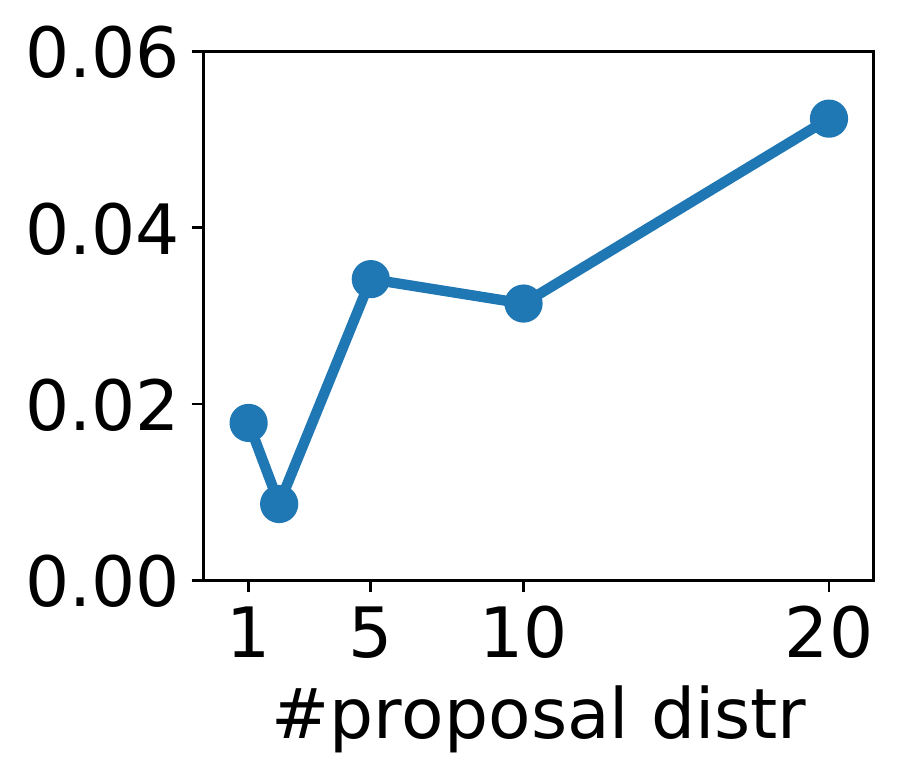}
				\label{fig:MIS-AMP_abnormal_instance}
			}
			\hspace{-1em}
			\subfloat[no comp. for \textbf{(b)}]{
				\includegraphics[width=0.31\linewidth]{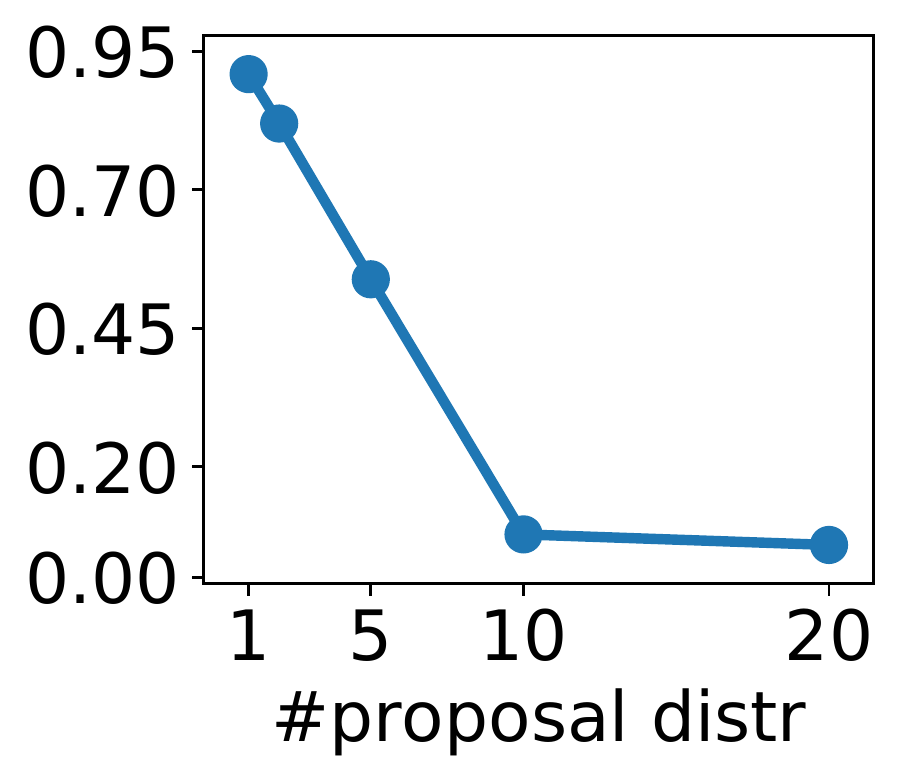}
				\label{fig:MIS-AMP_abnormal_instance_without_compensation}
		}}
		\caption{MIS-AMP-lite over \benchmarkA : (a) more proposal distributions improve accuracy; (b) an atypical instance; (c) accuracy improves with more proposal distributions again, after turning off compensation for (b).}
		\label{fig:MIS-AMP-lite_instances}	
	\end{minipage}
	\hfill
	\begin{minipage}[t]{.31\textwidth}
		\centering
		\includegraphics[width=.75\linewidth]{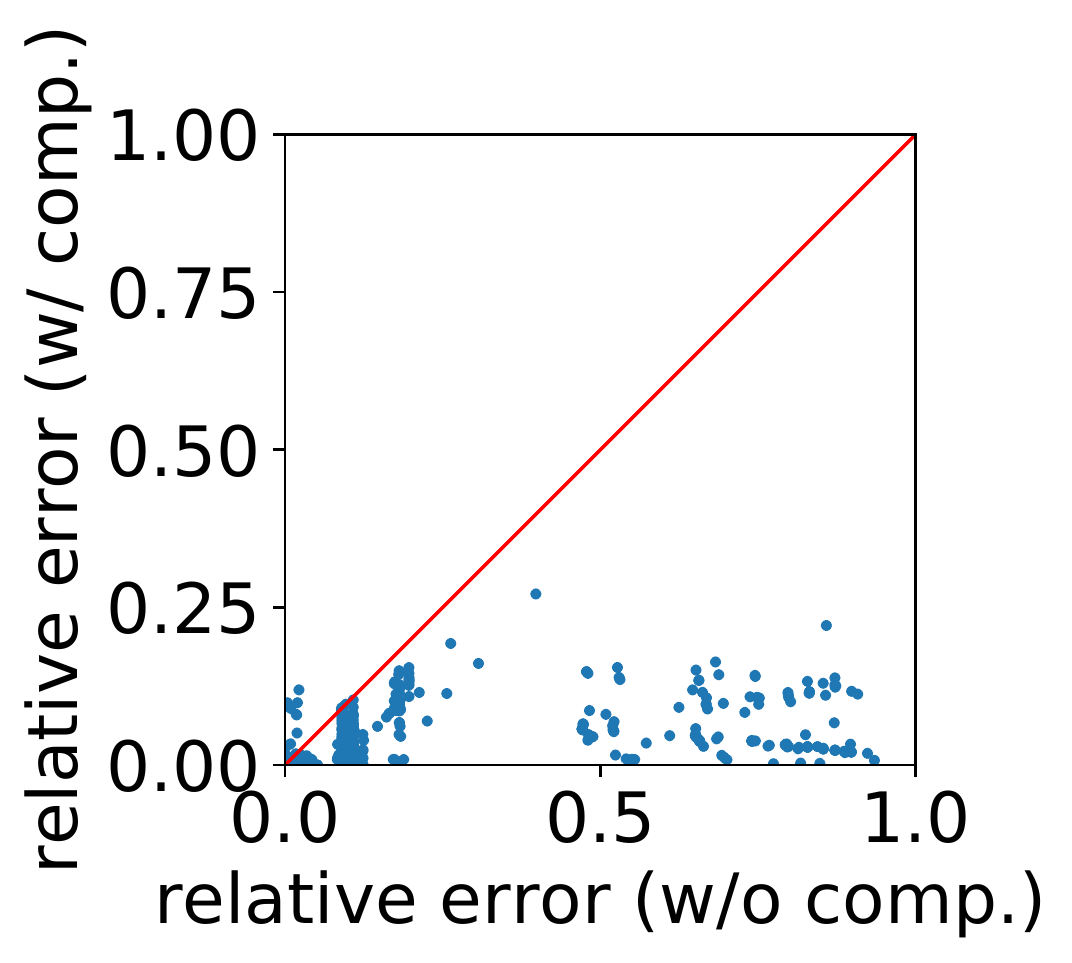}
		\caption{Compensation of MIS-AMP-lite improves the accuracy of estimation on {\bf Benchmark-C}.}
		\label{fig:MIS-AMP_compensation}
	\end{minipage}
    \hfill
\end{figure*}

\medskip
\textbf{Top-$k$ optimization over Polls.}
Next, we evaluate the performance of the \textbf{top-$k$ optimization} on \textbf{Polls} with 16 candidates. The query is the following.  (Note that it contains a self-join.)
\begin{equation*}
\begin{split}
Q() \leftarrow & P(\_,date;c_1;c_2), P(\_,date;c_1;c_3), P(\_,date;c_1;c_4), \\
& C(c_1,p,\_,\_,\_,\val{NE}), C(c_2,p,\_,\_,\_,\val{MW}), date=\val{"5/5"}, \\
& C(c_3,\_,\_,age,\_,\val{NE}), C(c_4,\_,\val{M},\_,\val{BA},\_), age = 50;
\end{split}
\end{equation*}

Figure~\ref{fig:polls__topK} displays the running times of evaluating this query under $k=[1, 10, 100]$. Three tallest bars represent the simple strategy of calculating all sessions. The lower bars with 2 colors represent top-$k$ optimization. The ``1-edge'' label means calculating upper bounds of all sessions by selecting only one edge from each pattern. The ``full'' label below ``1-edge'' is the amount of time spent on evaluating exact probabilities of sessions in descending order of their upper bounds until there are $k$ sessions having probabilities higher the probabilities or upper bounds of rest sessions. The ``2-edge'' label means selecting 2 edges for more accurate upper bounds. As a result, the ``full'' label below ``2-edge'' means fewer sessions to calculate. In Figure~\ref{fig:polls__topK}, applying ``1-edge'' and ``2-edge'' speeds up the evaluation of $k=1$ by 5.2 times and 8.2 times, respectively. Even for $k=100$, the speedup of applying ``1-edge'' and ``2-edge'' reaches 1.6 and 2.1, respectively.

\medskip
{\bf In summary}, all exact solvers have exponential complexity with query size. The two-label solver is the fastest, while the bipartite solver is also efficient and can be used also for two-label queries as a special case. These two solvers are also effective in scope of the top-$k$ optimization.

\begin{figure*} [t!]
	\centering
	\begin{minipage}[t]{.64\textwidth}
		\scalebox{1}{
			\centering
			\medskip
			\subfloat[Overhead time]{
				\includegraphics[width=0.49\linewidth]{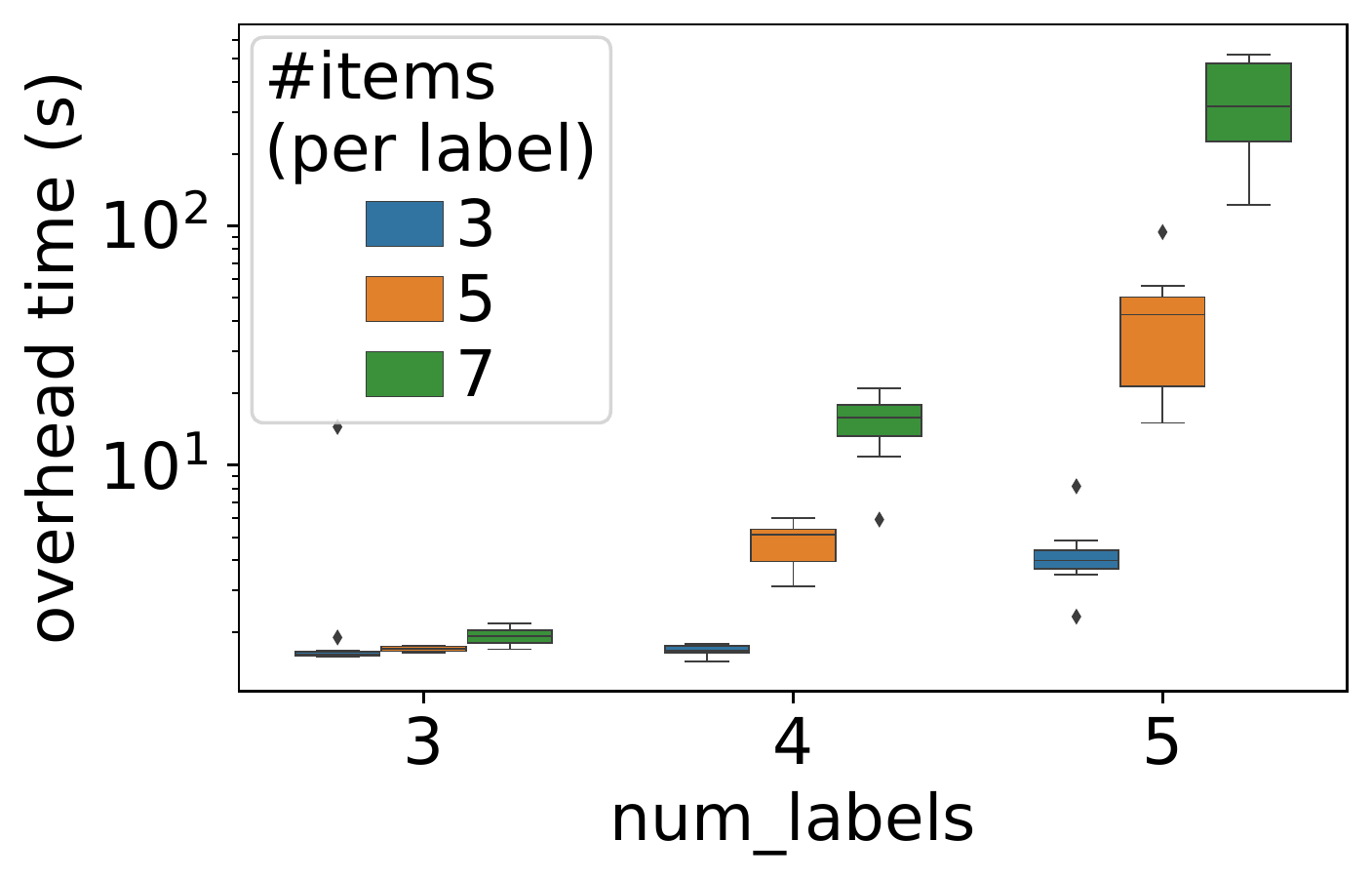}
				\label{fig:MIS-AMP-adaptive_overhead}
			}
			\hfill
			\subfloat[Convergence time]{
				\includegraphics[width=0.49\linewidth]{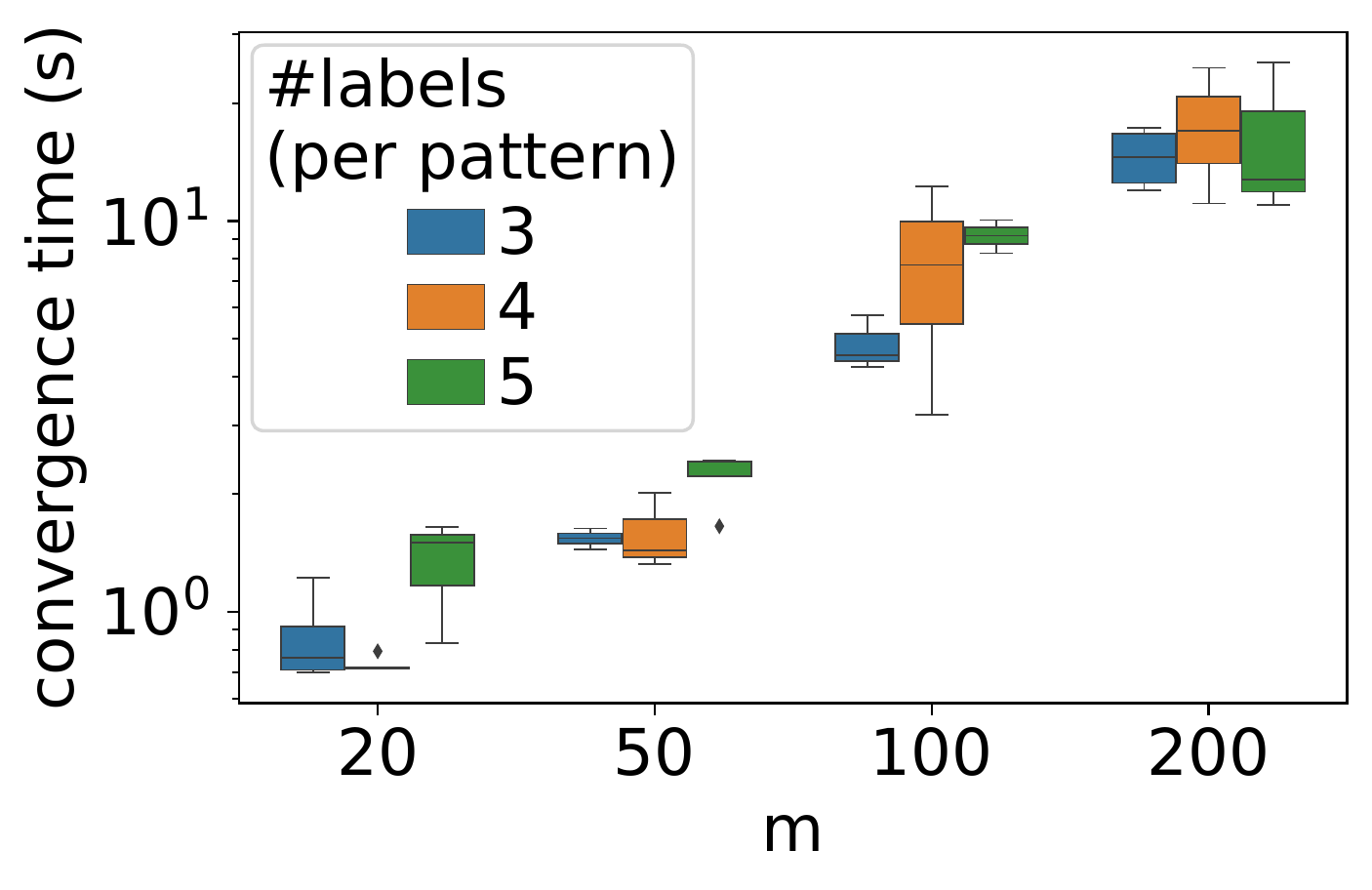}
				\label{fig:MIS-AMP-adaptive_convergence}
			}
		}
		\caption{Scalability of MIS-AMP-adaptive over {\bf Benchmark-B}.}
	\end{minipage}
	\hfill
	\begin{minipage}[t]{.32\textwidth}
    \centering
	\includegraphics[width=\linewidth]{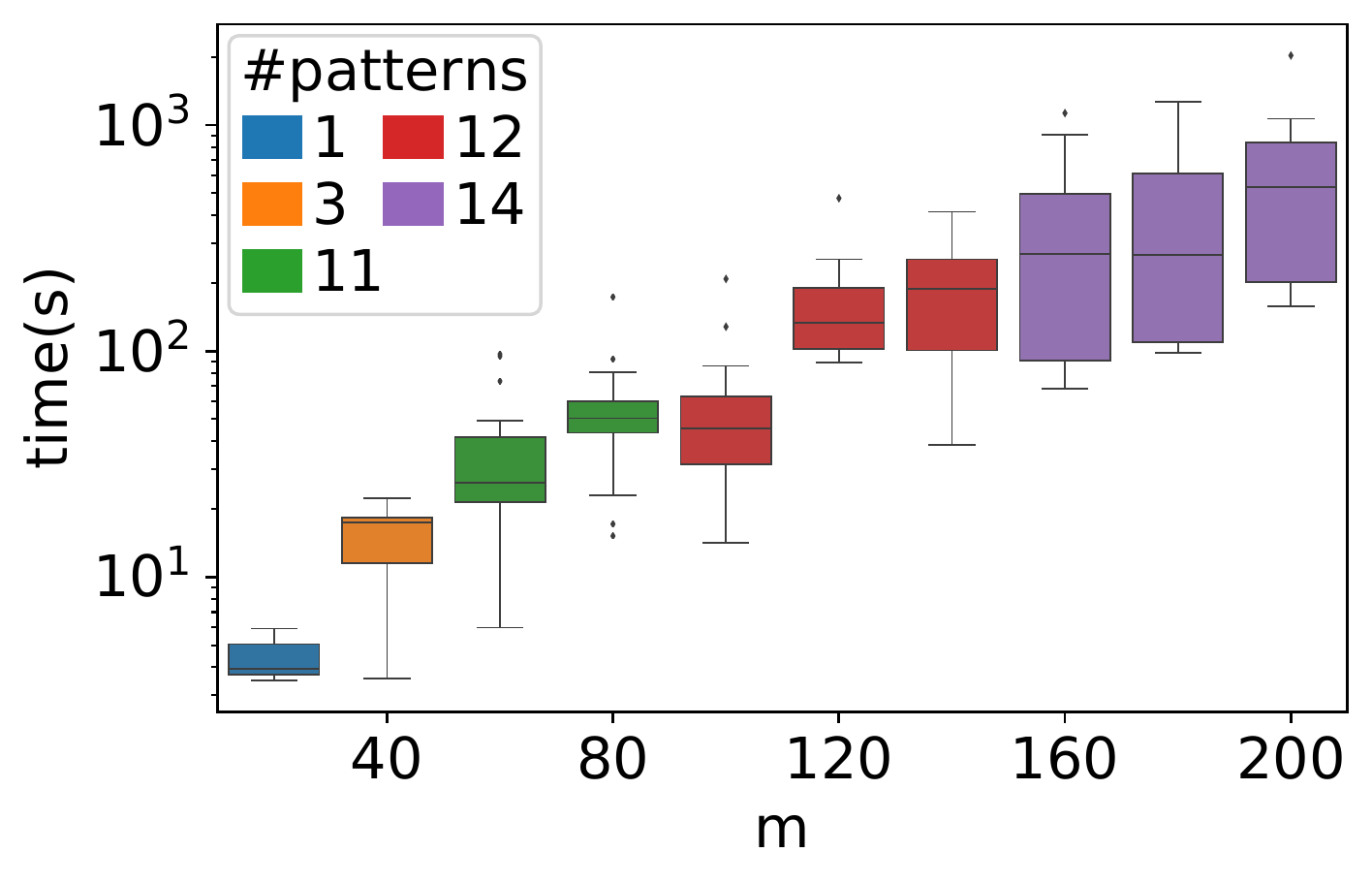}
	\caption{MIS-AMP-adaptive runtime over \textbf{MovieLens}.}
	\label{fig:MIS-AMP-adaptive__movielens__runtime}
	\end{minipage}
	\hfill
\end{figure*}

\subsection{Performance of Approximate Solvers}
\label{sec:exp:approx}

\medskip
\textbf{Rejection Sampling is inefficient for rare events.}
We constructed a simple low-probability query $\sigma_m \succ \sigma_1$ for $\mallows(\bsigma, 0.1)$, where $\bsigma = \ranking{\lst{\sigma}}$.
When increasing $m$, the $\Pr(\sigma_m \succ \sigma_1 | \bsigma, 0.1)$ decreases exponentially, and RS needs EXP($m$) samples for convergence.
In this experiment, we generate 6 Mallows models with $m\in \{5,6,7,8,9,10\}$.
For each Mallows, we run RS and MIS-AMP-lite 10 times. The exact values of $\Pr(\sigma_m \succ \sigma_1 | \bsigma, 0.1)$ are pre-calculated.
RS stops running when the estimated probability is within 1\% relative error.
(Note that this is an optimistic stopping condition for RS, since the algorithm would not yet be able to determine that it converged.)
MIS-AMP-lite is set to have only 1 proposal distribution.
Figure~\ref{fig:RS_slow} shows that RS running time increases exponentially with $m$, while MIS-AMP-lite is much more scalable.

\medskip
\textbf{MIS-AMP-lite over \benchmarkA, \benchmarkC.} The number of proposal distributions is a critical parameter for MIS-AMP-lite. In this experiment, MIS-AMP-lite is executed with 1, 2, 5, 10, 20 proposal distributions.

Figure~\ref{fig:all} gives the distributions of relative errors of MIS-AMP-lite as a function of the number of proposal distributions on \benchmarkA \ and \benchmarkC \ with the number of patterns in union, number of labels per pattern, and number of items per label fixed to be 3. Accuracy improves as the number of proposal distributions increases, and plateaus at around 20 distributions.  Overall, MIS-AMP-lite shows low relative error.

Figure~\ref{fig:MIS-AMP_normal_instance} complements these cumulative results, showing accuracy of MIS-AMP-lite on a specific  instance, where 10 distributions is a good choice.  Further, we investigated an atypical instance in Figure~\ref{fig:MIS-AMP_abnormal_instance}. Its relative error was reduced mainly by the compensation, and adding proposal distributions kept increasing accuracy after turning off compensation, as shown in Figure~\ref{fig:MIS-AMP_abnormal_instance_without_compensation}.

\medskip
\textbf{MIS-AMP-lite over \benchmarkC.}
To test the effectiveness of compensation systematically, we ran MIS-AMP-lite with one proposal distribution over \benchmarkC.
Figure~\ref{fig:MIS-AMP_compensation} shows that the accuracy of most instances improved by compensation  (blue dots under the red line), especially those near the lower right corner, corresponding to instances where relative error was very high (close to 100\%) before compensaiton, and was reduced dramatically by applying compensation.

\medskip
\textbf{MIS-AMP-adaptive over \benchmarkB.} MIS-AMP-adaptive has two stages, proposal distribution construction and sampling. 
Figure~\ref{fig:MIS-AMP-adaptive_overhead} shows the overhead due to proposal distribution construction, fixing 100 items in Mallows model and 3 patterns in union. As expected, the overhead increases sharply with the number of labels, especially when there are many items per label.
But once proposal distributions are constructed, sampling converges quickly. 

Figure~\ref{fig:MIS-AMP-adaptive_convergence} shows the sampling time, fixing 2 patterns in union and 5 items per label. The sampling time increases only moderately with the number of items in Mallows model, and the query size (number of labels) doesn't have significant impact on sampling time. Note that due to the randomness of sampling procedure, here we repeated the sampling 3 times and select the median value to plot in figure.

\medskip
\textbf{MIS-AMP-adaptive over MovieLens.} We vary the number of movies $m$ from 40 to 200 to test scalability with:
\begin{equation*}
\begin{split}
Q() \leftarrow & P(\_; \val{223}; \val{111}), P(\_; x; \val{111}), P(\_; x; y),\\
& M(x, \_, year_1, genre), year_1 >= \val{1990},\\
& M(y, \_, year_2, genre), year_2 < \val{1990};
\end{split}
\end{equation*}
The query asks whether the movie \textit{Clerks} (id 223) is preferred to \textit{Taxi Driver} (id 111), and whether some movie released after 1990 is preferred to a movie before 1990 and also to \textit{Taxi Driver}.
Figure~\ref{fig:MIS-AMP-adaptive__movielens__runtime} shows the running time of MIS-AMP-adaptive over the sessions. Note that when number of movies $m$ increases, there are more genres in the dataset, yielding more patterns in the pattern union.

\medskip
{\bf In summary}, the approximate solvers are scalable and accurate. Multiple proposal distributions help them reach the important regions of the target distribution.
Although MIS-AMP-lite prunes many modals, the compensation step works. When applying MIS-AMP solvers to large dataset such as \textbf{MovieLens}, the overhead of proposal distribution construction is significant. But once the proposal distributions are ready, MIS-AMP solvers converge fast.

\begin{figure}[t!]
	\centering
	\includegraphics[width=.7\linewidth]{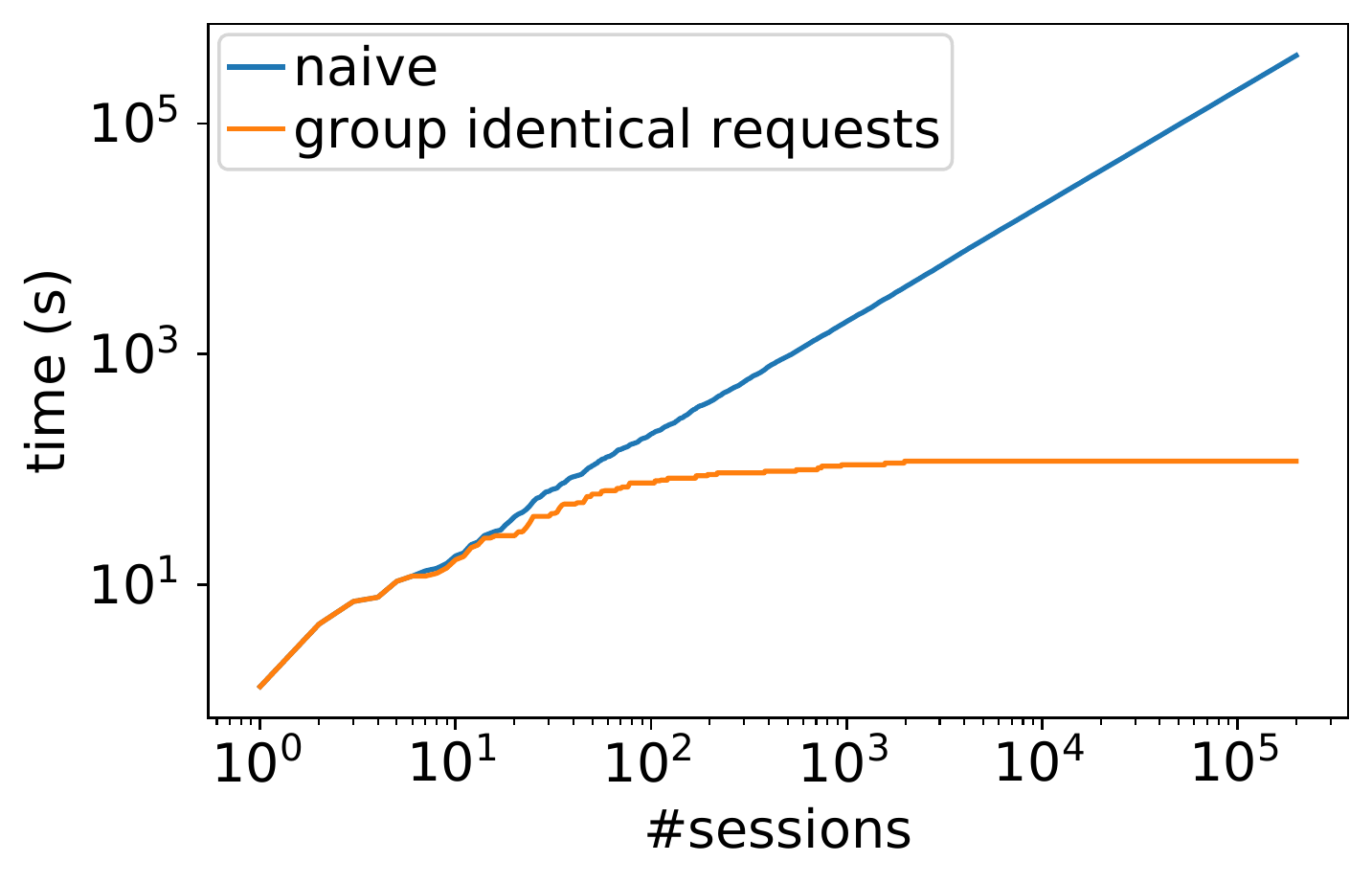}
	\caption{Scalability over 200K sessions in \textbf{CrowdRank}.}
	\label{fig:crowdrank}
\end{figure}

\subsection{Scalability over Sessions}

When evaluating a query, multiple sessions may share the same RIM model and pattern union.
RIM-PPD groups identical requests before invoking inference solvers, realizing performance gains.
We illustrate scalability in the number of sessions using a query that  asks whether a user prefers a movie with the leading actor of their gender to a movie with the leading actor around their age.
Focus on short ($<$ 90 min) movies that are preferred to some Thriller. 
\begin{equation*}
\begin{split}
Q() \leftarrow & P(v; m_1; m_2), P(v; m_2; m_3), V(v, sex, age), \\
& M(m_1, \_, sex, \_, \val{short}), M(m_2, \_, \_, age, \val{short}), \\
& M(m_3, \val{Thriller}, \_, \_);
\end{split}
\end{equation*}

Figure~\ref{fig:crowdrank} shows the results of running the general solver over \textbf{CrowdRank} with 200,000 sessions. The naive implementation runs in linear time in the number of sessions, while grouping requests quickly converged after 118 seconds.
\section{Conclusions}
\label{sec:conclusions}

In this work, we developed methods for answering computationally hard queries over probabilistic preferences, where we enable users to express preferences over item attributes in the form of values or variables. 
To evaluate this class of hard queries, we developed a general solver that applies inclusion-exclusion principle.
Then, we took the optimization opportunities in two-label patterns and bipartite patterns, significantly reducing query evaluation time. 
Scalability was further improved by approximate solvers, where we studied the posterior distributions of pattern unions over Mallows models, and applied Multiple Importance Sampling to effectively estimate the Mallows posterior probability.

Future directions include supporting additional aggregation queries (e.g., average age of voters who prefer a republican to a democrat), and incorporating probabilistic  preference models beyond RIM~\cite{Fligner1986, DBLP:conf/nips/LebanonM07}.

\balance

\bibliographystyle{abbrv}
\bibliography{references.bib}

\end{document}